\newtheorem{thm}{Theorem}
\newtheorem{prop}{Proposition}
\providecommand{\norm}[1]{\lVert#1\rVert}
\providecommand{\abs}[1]{\lvert#1\rvert}
\providecommand{\inner}[2]{\langle#1,#2\rangle}
\providecommand{\C}{\mathbb{C}}
\providecommand{\R}{\mathbb{R}}
\providecommand{\N}{\mathbb{N}}
\newenvironment{psmallmatrix}
  {\left(\begin{smallmatrix}}
  {\end{smallmatrix}\right)}
\title{Quantum Probability via
the Method of Arbitrary Functions}
\author[1]{Liam Bonds}
\author[2]{Brooke Burson}
\author[1,3]{Kade Cicchella}
\author[3]{\\Benjamin H. Feintzeig}
\author[4]{Lynnx}
\author[5]{Alia Yusaini}
\affil[1]{Department of Physics, University of Washington}
\affil[2]{Department of Mathematics, University of Iowa}
\affil[3]{Department of Philosophy, University of Washington}
\affil[4]{Department of Physics, University of Maryland}
\affil[5]{Department of Mathematics, Northeastern University}
\date{}
\begin{document}

\maketitle

\begin{abstract}
    The goal of this paper is to apply the collection of  mathematical tools known as the \emph{method of arbitrary functions} to analyze how probability arises from quantum dynamics.  We argue that in a toy model of quantum measurement the Born rule probabilities can be derived from the unitary Schr\"odinger dynamics when certain dynamical parameters are treated as themselves random variables with initial probability distributions.  Specifically, we study the perturbed double well model, in which the perturbation is treated as a random variable, and we show that for arbitrary initial distributions within a certain class, the dynamics yields the Born rule probabilities in the joint limits given by long times and small values of Planck's constant (the classical limit).  Our results establish the Born rule as a type of universal limiting behavior that is independent of the precise initial dynamical parameters.
\end{abstract}

\section{Introduction}

Quantum mechanics infamously gives rise to probabilistic predictions in measurement settings.  These predictions can be captured by the Born rule, which in one formulation (for projective measurements) asserts that the probability $Pr(P)$ of getting a positive outcome to a measurement of a ``yes-no" proposition represented by a projection operator $P$ is given by the formula
\begin{align}
    Pr(P) = \inner{\Psi}{P\Psi}
\end{align}
when the system is in a state represented by a unit vector $\Psi$ in the Hilbert space corresponding to the system.
While in the orthodox formulation of quantum mechanics, one often sees the Born rule treated as an independent postulate, some have thought that it cries out for an explanation.  One reason for desiring an explanation might be the apparent conceptual gap between the Born rule and the other principles of the theory, which typically are taken to not mention probability whatsoever.  Indeed, proponents of various solutions to the measurement problem have taken it upon themselves to attempt to explain the Born rule. %shorter version in psa.tex if needed

Fulfilling this explanatory duty---pressed upon us by the Born rule's success---has been touted as a virtue of certain interpretations of quantum theory.  And so one might wonder: to what extent is it possible to explain the Born rule in the absence of a full interpretation? In other words, is there any relationship between the Born rule and the shared framework of quantum theory that the interpretations have in common?  In this paper, we will argue for a partial positive answer to this question.

We say only a partial positive answer because one might have the view that explanations in general require some recourse to interpretation.  We leave open this possibility in what follows.  What we aim to do specifically is to provide an \emph{explanation schema}, which we hope could be filled in with the details of particular interpretations---of both probability and quantum theory.  We leave these interpretive tasks to future work.  In this paper, we present technical results that we hope could serve as the explanatory core for the approach advocated here.

The explanation schema we pursue here is inspired by methods developed to answer analogous questions in the context of classical physics.  Asking why the Born rule holds of the probabilities assigned to measurement results is just one example of asking why a physical system that is understood as a chance setup produces outcomes with certain probabilities.  Other such physical systems that are standardly understood as chance setups include flipped coins, spun roulette wheels, rolled dice, and more.  For those systems---which are modelled accurately enough using classical physics---one often motivates the need for an explanation by pointing to an apparent and puzzling incompatibility between the probabilistic outcomes and the deterministic physical dynamics.  In quantum theory, this issue does not disappear, but instead is transformed to the puzzle of reconciling the probabilistic nature of measurement outcomes with the (standardly) deterministic dynamics of Schr\"{o}dinger evolution.  Given the similarities between the explanatory question concerning the Born rule at issue here and the analogues in the context of classical chance setups, we believe it is worth investigating whether the methods used to attempt the analogous classical explanations can be imported to the setting of quantum theory.

One collection of resources that have been employed for explaining the values of probabilities assigned to outcomes of classical chance setups is known as the \emph{method of arbitrary functions}.  The rough idea is to take the physical dynamics of the system as fixed, and to show that for arbitrary distributions (in a certain class) over initial parameters governing time evolution, the system will evolve (often in an appropriate limit) to one in which the distribution assigns probability values that are arbitrarily close to the ones we were aiming to explain.  We provide examples to illustrate these methods later on.  Our task is to show that the method of arbitrary functions can be applied to the dynamics of quantum physics as well---specifically, the unitary Schr\"{o}dinger evolution that all no-collapse interpretations of quantum mechanics share---and that doing so leads to a derivation of the Born rule probabilities in certain examples.  While we will only consider particular measurement models, and so we leave much room for further mathematical work in this regard, we hope to at least provide a proof of concept to motivate future work on this approach.

The specific model of quantum measurement that we consider here is the double well potential with the ``flea" perturbation investigated by \citet{LaRe13}.  They provide an interpretation of the model in terms of which the ordinary unitary evolution drives states initially in superpositions arbitrarily close to states with definite measurement outcomes.  The model thus provides a possible mechanism for \emph{effective collapse} of the quantum state under only unitary evolution.  However, \citet{HeWo18} criticize the interpretation of the model as relevant to the measurement problem on several grounds, including for lacking the ability to explain the Born rule.  Our goal is to respond by showing that the method of arbitrary functions allows one  to recover the Born rule, even for superpositions with unequal probabilities.  Thus, our results both bear on a local debate about the status of a particular approach to quantum measurement and provide methods that may be used to explain quantum probabilities more generally.

The structure of the paper is as follows.  In \S\ref{sec:meas} we provide some background concerning explanations of the Born rule in quantum mechanics.  In \S\ref{sec:arbfun}, we review attempts to explain classical probabilities via the method of arbitrary functions and illustrate a first, simplified application of the same mathematical tools to quantum mechanics in the model of a simple harmonic oscillator.  In \S\ref{sec:prob}, we move to our specific model of quantum measurement, applying the method of arbitrary functions to the double well potential with the flea.  Finally, \S\ref{sec:con} concludes with some discussion of the results and some possible future directions.

\section{Measurement and the Born Rule}
\label{sec:meas}

It is now standard in the foundations of quantum mechanics to associate explanations of the Born rule with the measurement problem.  For example, \citet{Ma95} classifies explaining the Born rule as a second problem associated with quantum measurement---the ``problem of statistics"---to be dealt with after one has a theory that explains how measurements yield definite results---the ``problem of outcomes".  However, proponents of different solutions to the measurement problem have sometimes used radically different approaches to explain the Born rule.  (For accessible introductions to different solutions to the measurement problem, see, e.g., \citep{Le16,Ma19,Ba20b}.)

%In this section, we give a flavor of those dramatic differences by reviewing two prominent explanations of the Born rule in Bohmian mechanics (\S\ref{sec:Bohm}) and Everettian quantum mechanics (\S\ref{sec:EQM}), respectively.\footnote{We assume some familiarity with each of these theories, and so we do not provide a comprehensive introduction to either.  For accessible introductions, we refer the reader to \citet{Wa12,Le16,Ma19,Ba20b}.}  Then in \S\ref{sec:flea} we review recent results from \citet{LaRe13}, which those authors have suggested provide a foundation for a new solution to the measurement problem from within standard quantum theory.  We will take up their model in the rest of the paper (while remaining agnostic about whether it provides a solution to the measurement problem), but first we motivate our work by arguing that their results are insufficient to explain the Born rule.

For example, consider the explanation of the Born rule in contemporary Bohmian mechanics.  In Bohmian mechanics---where measurement outcomes are determined by the positions of particles as their motion is guided by the pilot wave---the condition known as \emph{quantum equilibrium} is essential to the explanation of the Born rule.  The condition of quantum equilibrium is that the \emph{initial} distribution of particles (before measurement) matches that given by the squared amplitude of the wavefunction.  If a system is in quantum equilibrium at any given time, then it will remain so for all future times.  The explanation of the Born rule in Bohmian mechanics due to \citet{DuGoZa92} comes in two steps.  First, they attempt to justify the quantum equilibrium condition by appealing to a typicality measure that is equivariant under the dynamics.  Second, they show that quantum equilibrium leads to subsystems like measuring devices typically displaying long run frequencies that approximately match those given by the Born rule.

In contrast, consider the explanation of the Born rule in the contemporary many-worlds interpretation.  The version of the Everettian many-worlds theory developed by \citet{Wa12}---in which worlds are effective quasi-classical structures emerging from the unitary evolution of the wavefunction in decoherence scenarios---employs decision theory to explain the Born rule probabilities.  Wallace relies on the ``principal principle" to treat probability as guiding the beliefs of a rational agent.  He then proves a decision-theoretic representation theorem to show that an agent in a quantum multiverse satisfying certain conditions, which he argues are constraints of rationality, must set their degrees of belief according to the Born rule.

Neither of these Bohmian or Everettian approaches to explaining the Born rule is universally accepted.  For example, \citet{Va91} provides an alternative Bohmian explanation of quantum equilibrium, and \citet{Ru23} criticizes existing Bohmian explanations of the Born rule.  Similarly, alternative approaches to Everettian probability abound \citep{SeCa18}.  In light of the differences between explanations of quantum probabilities within distinct interpretations of quantum mechanics, and given the disagreements among those party to the debates, we believe it is fruitful to pursue the question of whether one could provide an explanation of the Born rule \emph{without committing to an interpretation of quantum mechanics}.  This motivates considering a recent proposal for understanding quantum measurement in the framework of standard quantum theory.

\citet{LaRe13} argue that standard quantum theory has the resources to explain the production of approximately determinate outcomes, which they claim provides a solution to the measurement problem with no modification to quantum theory.  The results are discussed further by \citet{La13,La17}.  Their central idea is that small perturbations  to the potential of a system from the environment can give rise via the standard quantum dynamics to ``effective collapse" of the wavefunction during measurement processes. 
 This proposal has been controversial, and so we will remain agnostic about the claim that this provides a solution to the measurement problem.  Instead, in this paper we will investigate their model without endorsing all of their interpretational claims.  Indeed, since one of the specific criticisms that \citet{HeWo18} level at Landsman and Reuvers is that one cannot provide an adequate explanation of the Born rule in the effective collapse model, we take this as motivation to search for ways of associating probabilities to this system, which is our goal in this paper.

We begin by describing the concrete model employed by \citet{LaRe13}.  The model is a quantum system in a symmetric double well potential with Hamiltonian given by the unbounded operator $H_{0,\hbar}$ on $L^2(\mathbb R)$ defined by
\begin{align}
    H_{0,\hbar} = -\frac{\hbar^2}{2m}\frac{d^2}{dx^2} + V_0(x)\nonumber\\
    V_0(x) = \frac{1}{4}\lambda(x^2-a^2)^2.
\end{align} 
Here, the parameters in the potential $\lambda$ and $a$ are positive constants determining, respectively, the shape of the wells and the width of the barrier between them.  In this case, the ground state wavefunction $\Psi_\hbar^{(0)}$ of the system contains two peaks above $x=+ a$ and $x=-a$.  The ground state wavefunction is approximately Gaussian in a neighborhood of each peak for small values of $\hbar$.  For these reasons, \citet{LaRe13} interpret this ground state as a superposition of the two states localized in each well, analogous to the state in which a measuring device is in a superposition of yielding two outcomes or the state in which Schr\"{o}dinger's cat is in a superposition of being dead and alive.

The problem \citet{LaRe13} consider is how the system might transition from the ground state $\Psi_\hbar^{(0)}$ to a state in which the system is effectively localized in one well, which is analogous to a state in which there is an approximately determinate measurement outcome.  Indeed, one can construct approximately localized states by considering  the first excited state $\Psi_\hbar^{(1)}$ and defining the states
\begin{align}
    \Psi^\pm_\hbar = \frac{\Psi^{(0)}_\hbar \pm \Psi^{(1)}_\hbar}{\sqrt{2}}.
\end{align}
Each of these wavefunctions $\Psi^\pm_\hbar$ has a single peak at $x=\pm a$ and is approximately Gaussian in a neighborhood of that peak for small $\hbar$.  So the task \citet{LaRe13} propose can be reformulated as that of finding a mechanism for the system to evolve from the ground state $\Psi_\hbar^{(0)}$ to one of the localized states $\Psi_\hbar^\pm$.  

To that end, they consider adding a small, localized perturbation---labelled the ``flea"--to the potential of the system. 
\begin{align}
\label{eq:doublewellham}
    H_{\delta,\hbar} = -\frac{\hbar^2}{2m}\frac{d^2}{dx^2} + V_\delta(x), \\
    V_\delta(x) = \frac{1}{4}\lambda(x^2-a^2)^2 +\delta(x)
\end{align}
The flea perturbation $\delta(x)$ is a smooth, compactly supported function of position $x$, whose support does not include the minima of the potential and satisfies certain constraints \citep[][p. 387-388]{LaRe13} Even a small perturbation of this kind can drastically change the shape of the ground state wavefunction.\footnote{Despite calling the flea a ``perturbation", we will not be treating the resulting dynamics using the techniques regularly called ``perturbation theory" \citep[][Ch. 6, 9]{Gr95}. 
 See also \citep{Ka95}.}  They show that for small values of $\hbar$, the ground state of the perturbed Hamiltonian is almost completely localized in a single well as determined by the flea: the state is localized on the opposite side of a positive flea, or on the same side as a negative flea, reflecting the minimization of energy.  Moreover, \citet{LaRe13} provide a dynamical model in which the flea perturbation is added adiabatically to the unperturbed Hamiltonian.  They show that this produces a mechanism in which, for long enough times, the original ground state becomes dynamically localized in one well, thus affecting a transition from a superposition to one of its components.

The sense in which this localization produces an effective collapse, or approximately determinate measurement outcome, is provided through the classical $\hbar\to 0$ limit.  (We will provide details on the mathematical tools for taking the classical $\hbar\to 0$ limit later in \S\ref{sec:arbfun}.)  \citet{LaRe13} pay homage to Bohr's ideas that measurement results should be described using \emph{classical physics}, and so they interpret measurement outcomes as the definite classical states at the positions $x = \pm a$.  The original ground state $\Psi^{(0)}_\hbar$ of the unperturbed Hamiltonian, which is a superposition of different localized states, has as its classical limit the classical mixed state given by a probabilistic mixture, which we denote for points $(x,p)$ in phase space as
\begin{align}
    \rho^{(0)}_0 = \frac{1}{2}\Big(\delta_{(+a,0)} + \delta_{(-a,0)}\Big),
\end{align}
where $\delta_{(\pm a,0)}$ is the delta-function or point-mass distribution centered at the phase space point with position $x = \pm a$ and momentum $p=0$ \citep[][p. 382]{LaRe13}.\footnote{The point mass distributions $\delta_{(\pm a, 0)}$ should not be confused with the flea perturbation function $\delta(x)$ which is denoted without any subscript.}  The classical limit of the unperturbed ground state $\rho^{(0)}_0$ does not correspond to a single phase space point, and indeed it has nonzero variance in position. 
 \citet{LaRe13} interpret $\rho^{(0)}_0$ as a state in which there is \emph{no} classical determinate measurement outcome.  On the other hand, the ground state of the perturbed Hamiltonian becomes localized in one well, approaching one of $\Psi^{\pm}_\hbar$ (as determined by the location of the perturbation).  It's classical limit is the classical pure state given by the probability measure
\begin{align}
    \rho^{\pm}_0 = \delta_{(\pm a, 0)},
\end{align}
at the phase space point $x=\pm a$ and $p=0$, which signifies that the wavefunction is localized around a single peak at $x=\pm a$ 
\citep[][p. 388-9]{LaRe13}. Since the classical limit of the perturbed ground state $\rho^{\pm}_0$ corresponds to a single phase space point, \citet{LaRe13} interpret it as one in which there \emph{is} a classical determinate measurement outcome.

As mentioned before, the work of Landsman and Reuvers has not been widely accepted as a solution to the measurement problem. In particular, \citet{HeWo18} provide a number of challenges to the interpretation and generalization of the results in the double well model, which we can describe now that we have presented the basics of Landsman and Reuver's model.  Do effective or approximate measurement outcomes in the $\hbar\to 0$ limit of the model suffice to explain the appearance of actual determinate measurement outcomes for physically realistic values $\hbar>0$?  Can one justify treating the environmental influence via a perturbation to the potential rather than an interaction with a further quantum system?  Can one achieve similar results in models that allow for more than two possible ``outcome states" in a measurement process?  We set aside these important questions for future work.  In what follows, we remain agnostic about whether the results of \citet{LaRe13} provide a solution to the measurement problem.  Instead, we take the double well potential as only an interesting model to investigate for the tools we develop to explain the Born rule.

We note that our investigation is motivated by the current state of the literature because \citet{HeWo18} also argue that the explanation that \citet{LaRe13} offer for the Born rule is inadequate.  Landsman and Reuvers claim that since the location of the flea perturbation determines the measurement outcome on their interpretation, the statistics of measurement outcomes are determined by the statistics of the flea.  Put another way, our uncertainty about the outcome of the measurement is determined by our uncertainty about the position of the flea within the potential.  Thus, they explain the equally weighted Born rule statistics produced by the unperturbed ground state superposition by arguing from a principle of indifference; since we have no reason to expect the flea will favor one side over the other, we should employ a probability distribution for the flea that gives equal weight to both sides, and hence yields equal probabilities for the two measurement outcomes.  \citet{HeWo18} rightly point out that this explanation of the Born rule probabilities is unsatisfactory because it makes the probabilities depend \emph{only} on the distribution or uncertainty governing the location of the flea perturbation and \emph{not} on the initial quantum state in the superposition.  For example, this explanation leaves open the question: if the system started in a superposition
\begin{align}
    \Psi_\hbar = \alpha\Psi^+_\hbar + \beta\Psi^-_\hbar
\end{align}with \emph{unequal weights} $|\alpha|^2\neq |\beta|^2$ governing the different localized components, why should we assign unequal probabilities for the corresponding measurement outcomes?

We do not take this to invalidate the significance of work in the double well model.  Rather, we take this as a call to action for further investigation of probabilities in this framework.  In the remainder of the paper, we hope to provide an improved explanatory schema for the Born rule probabilities, and one that can yield explanations even for unequal weights in the double well model.  We now turn to the mathematical tools we will use for our derivation of the Born rule.

\section{The Method of Arbitrary Functions}
\label{sec:arbfun}

The schema we propose for explaining the Born rule stems from the \emph{method of arbitrary functions} as it has been used to explain probabilities for outcomes in classical chance setups. 
 The method of arbitrary functions provides tools for showing how physical dynamics lead to universal probability assignments, i.e., probability assignments that are at least approximately the same for a range of (initial) conditions.  The general idea is to show that for \emph{arbitrary} uncertainty about the initial state of a system or parameters in the dynamical equations, the distribution of the system will evolve in some approximate or limiting regime toward the \emph{same} final distribution.
 
 For example, \citet{Po96} analyzes the spinning of a roulette wheel, in which any smooth enough distribution over the initial angular velocity produces the result, in the regime of high enough velocities, of approximately equal probabilities for the wheel to land on each outcome.  Similarly, \citet{DiHoMo07} analyze the physical dynamics of flipping a coin and show that smooth enough distributions over the initial orientation and angular velocity also lead to the result, in the regime of high enough velocities, of approximately equal probabilities for heads and tails.  \citet{Ho34} discusses further examples, including rolling dice.  \citet{En92} provides comprehensive mathematical background that unifies many of these results in the context of classical physical examples.

Philosophers have also shown interest in the method of arbitrary functions for its role in understanding chance \citep{vP83}.  Some---including \citet{Ab10}, \citet{St11}, \citet{Ro11a}, and \citet{Ho16}---have argued that the method of arbitrary functions provides a physical foundation for objective probabilities.  On the other hand, \citet{My21} argues that the method of arbitrary functions only provides constraints from physics on what rational agents should agree to believe about outcomes of chance setups.  And \citet{dC22} criticizes the idea that the probabilities derived from the method of arbitrary functions could be sufficiently objective.  We take such philosophical discussions, although still controversial, to be evidence for the potential of the method of arbitrary functions to bear on foundational issues regarding physical probability.  In what follows, we remain agnostic about the precise interpretation given to the method of arbitrary functions, especially regarding subjective and objective notions of probability.

In the remainder of this section, we illustrate the application of the method of arbitrary functions to a simple example: the harmonic oscillator.  We first review the classical version of the harmonic oscillator, which has been previously studied, and which will allow us to state certain technical results that we will use in the rest of our investigation.  Then we show how to apply the method to the quantum version of the harmonic oscillator.  This will serve as both a proof of concept and an opportunity to introduce the mathematical tools we will need for considering the classical limit in later results.

For the classical harmonic oscillator in one dimension, e.g., a mass on a spring, the equation of motion for a mass starting from rest is given by
\begin{equation}
x(t) = x(0)\cdot \cos(\omega t),
\end{equation}
where $x(t)$ is the position of the mass at time $t$, $x(0)$ is its initial displacement, and $\omega$ is its frequency of oscillation determined by the inertial mass and the so-called spring constant.  If one has uncertainty about any of the initial conditions or dynamical parameters, then one can treat $x(t)$ as a random variable.  We consider specifically the case where we have uncertainty about the frequency and treat $\omega$ as a random variable.  The crucial mathematical ingredient is the realization that since the motion is periodic, we can write
\begin{equation}
\cos(\omega t) = \cos\left( (\omega t) (\text{mod }2\pi)\right).
\end{equation}
This focuses our attention on the random variable $(\omega t) (\text{mod }2\pi)$.  \citet[][Thm. 3.2 and Thm. 3.5]{En92} then provides the following fundamental result.
\begin{thm}
\label{thm:inftime}
    If the random variable $\omega$ possesses a density with respect to the Lebesgue measure on $\R^+$, then in the limit $t\to \infty$, the random variable $(\omega t) (\text{mod }2\pi)$ approaches a uniform distribution on the interval $[0,2\pi]$ in both the variation distance and the weak* topology.\footnote{A sequence of measures $(\mu_n)$ on $\mathbb R^n$ converge to a measure $\mu$ in the weak* topology if the expectation values for all smooth compactly supported functions $f$ converge, i.e. $\int f(x) d \mu_n(x) \to \int f(x) d \mu(x)$ for all $f \in C^\infty_c(\mathbb R^n)$.}
\end{thm}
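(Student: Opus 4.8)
The plan is to reduce the statement to the equidistribution of $(\omega t) \bmod 2\pi$ and then establish convergence in the two topologies separately, with the variation-distance convergence implying the weak* convergence as a corollary (since convergence in total variation is strictly stronger than weak* convergence on any measurable space). So the real content is the variation-distance claim. Write $\mu$ for the distribution of $\omega$ on $\R^+$, which by hypothesis has a density $g \in L^1(\R^+)$. For fixed $t > 0$, the pushforward of $\mu$ under the map $\omega \mapsto (\omega t) \bmod 2\pi$ is itself absolutely continuous on $[0, 2\pi]$: its density is obtained by the change of variables $u = (\omega t)\bmod 2\pi$, giving the ``wrapped and rescaled'' density
\begin{equation}
    g_t(u) = \frac{1}{t}\sum_{k=0}^{\infty} g\!\left(\frac{u + 2\pi k}{t}\right), \qquad u \in [0, 2\pi].
\end{equation}
The goal is then to show $\norm{g_t - \frac{1}{2\pi}}_{L^1[0,2\pi]} \to 0$ as $t \to \infty$, which is exactly twice the variation distance between the pushforward measure and the uniform distribution.

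The natural route is Fourier analysis on the circle. Expanding $g_t$ in a Fourier series on $[0, 2\pi]$, the $n$-th Fourier coefficient of $g_t$ is, after unwrapping the sum and a substitution, equal to $\hat{g}(n/t)$ up to normalization, where $\hat{g}$ denotes the Fourier transform of $g$ on $\R$ (this is essentially the Poisson summation heuristic made rigorous by the absolute convergence coming from $g \in L^1$). The zeroth coefficient is always $1$, matching the uniform density. By the Riemann--Lebesgue lemma, $\hat{g}(\xi) \to 0$ as $\abs{\xi} \to \infty$; hence for each fixed $n \neq 0$ the coefficient $\hat{g}(n/t)$ does \emph{not} obviously vanish as $t \to \infty$ — in fact $n/t \to 0$, so one instead uses continuity of $\hat{g}$ at $0$ together with $\hat{g}(0) = \int g = 1$. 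This is the point at which one must be careful: pointwise convergence of Fourier coefficients gives weak* convergence almost for free, but upgrading to $L^1$ convergence of the densities requires a uniform integrability or equicontinuity input.

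I expect this last upgrade to be the main obstacle. The clean way around it is to avoid densities altogether for the variation-distance part and instead invoke a standard equidistribution theorem directly: the map $t \mapsto \omega t \bmod 2\pi$ equidistributes for Lebesgue-almost every $\omega$ by Weyl's theorem, so one can write the pushforward as a mixture $\int (\text{law of }\omega t \bmod 2\pi)\, d\mu(\omega)$ and apply dominated convergence to the variation distance, which is bounded by $1$ — but pointwise-in-$\omega$ this inner distance need not go to zero along the continuous parameter $t$, only along suitable subsequences, so one needs the quantitative version. The cleanest fix, and the one I would actually carry out, is: first prove the result for $g$ of a special form (say, $g$ the indicator of a bounded interval, normalized), where $g_t$ can be computed explicitly and $\norm{g_t - \frac{1}{2\pi}}_{L^1} = O(1/t)$ by a direct counting argument on how many translates $2\pi k$ land in an interval of length $t\abs{I}$; then extend to general $g \in L^1(\R^+)$ by density, using that the map $g \mapsto g_t$ is a contraction from $L^1(\R^+)$ to $L^1[0,2\pi]$ (it does not increase the $L^1$ norm, being an averaging-type operator) together with $\norm{\frac{1}{2\pi}\mathbf{1} - \text{(image of any probability density)}}_{L^1} \le$ the $L^1$ distance of the approximants. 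An $\varepsilon/3$ argument — approximate $g$ by a step function $g'$, use the explicit $O(1/t)$ decay for $g'_t$, and control $\norm{g_t - g'_t}_{L^1} \le \norm{g - g'}_{L^1}$ — then closes the variation-distance claim; weak* convergence follows immediately, and one may also note that weak* convergence alone has an independent one-line proof via $\int f\, d\mu_t = \int \big(\frac{1}{T}\int_0^T f(\omega t \bmod 2\pi)\,dt\big)$-type averaging and Weyl equidistribution, which is worth recording since it needs no density hypothesis beyond what Weyl requires.
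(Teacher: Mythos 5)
The paper itself gives no proof of this statement: it is imported verbatim from Engel's monograph (Thms.\ 3.2 and 3.5 there), so the comparison is really with the standard literature proof. Your final argument --- compute the wrapped density $g_t(u)=\tfrac{1}{t}\sum_k g((u+2\pi k)/t)$, establish the $O(1/t)$ bound in $L^1$ for normalized indicators of bounded intervals by counting lattice translates, observe that $g\mapsto g_t$ is an $L^1$ contraction, and close with an $\varepsilon/3$ approximation by step functions --- is correct and is essentially Engel's own route (he proves quantitative bounds for densities of bounded variation and extends to $L^1$ by density), so the load-bearing part of your proposal is fine. Two local criticisms. First, your Fourier paragraph contains a genuine scaling error: the $n$-th Fourier coefficient of the law of $(\omega t)\bmod 2\pi$ is $\mathbb{E}[e^{-in\omega t}]=\hat{g}(nt)$, not $\hat{g}(n/t)$; with the correct scaling, $\abs{nt}\to\infty$ for $n\neq 0$ and Riemann--Lebesgue gives pointwise decay of every nonzero coefficient (hence weak* convergence) with no appeal to continuity of $\hat{g}$ at $0$. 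As written, your paragraph argues the coefficients tend to $\hat{g}(0)=1$, which would contradict the theorem; you should correct or delete it (it is not used in your final argument, but it is false as stated). Second, a minor bookkeeping point: a step-function approximant $g'$ of $g$ is generally not a probability density, so in the $\varepsilon/3$ step you should either normalize $g'$ or note that $\norm{g'_t-\tfrac{c}{2\pi}}_{L^1}\to 0$ with $c=\int g'$ and $\abs{c-1}\le\norm{g-g'}_{L^1}$. The concluding remark about Weyl equidistribution is also off target --- for fixed $\omega$ the law of $\omega t\bmod 2\pi$ is a point mass for every $t$ and never converges to uniform, as you partly concede --- but since you discard that route it does no harm.
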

\noindent \citet[][\S3]{En92} determines explicit bounds on the rate of convergence for a random variable whose density has bounded variation.  One consequence we wish to highlight is that in the $t\to\infty$ limit, the expectation value of $x(t)$ approaches $0$.  This establishes that a large class of uncertainties about the initial parameters of the system all approach the same distribution, and same expectation values, in the limit of long times.

We now provide our first novel illustration of the use of the method of arbitrary functions in quantum mechanics by showing that the same methods can be applied to the quantum harmonic oscillator.  The quantum harmonic oscillator is described by a wave function $\Psi\in L^2(\R)$ satisfying the Schr\"odinger equation for the Hamiltonian
\begin{align}
    H^\hbar_{m,\omega} = -\frac{\hbar^2}{2m}\frac{\partial^2}{\partial x^2} + \frac{1}{2}m\omega^2 x^2,
\end{align}
where $\hbar$ is Planck's constant.  The time evolution of the quantum state yields
\begin{align}
    \Psi^\hbar_{m,\omega}(t) = e^{-iH^\hbar_{m,\omega} t/\hbar}\Psi
\end{align}
for all $t>0$, where the values of Planck's constant $\hbar$, the mass $m$, and the frequency $\omega$ are fixed.

To analyze this system, we will use the classical $\hbar\to 0$ limit.  To that end, we employ the tools given by \emph{Weyl quantization} and \emph{Wigner functions} \citep{La98b}.  The Weyl quantization maps $\mathcal{Q}_\hbar: C_c^\infty(\R^2)\to \mathcal{B}(L^2(\R))$ take classical observables $f\in C_c^\infty(\R^2)$ on the phase space $\R^2$ and transform them to quantum operators according to the definition
\begin{align}
    \left(\mathcal{Q}_\hbar(f)\Psi\right)(x) = \int_{\R^2} \frac{dqdp}{2\pi\hbar} e^{ip\cdot (x-q)/\hbar} f\left(\frac{1}{2}(x+q), p \right) \Psi(q)
\end{align}
for all $\Psi\in L^2(\R)$.
The Wigner function associates any quantum state given by a wavefunction $\Psi$ to a classical quasi-probability density $W_\hbar^\Psi: \R^2\to \R$ on the classical phase space given by
\begin{align}
    W_\hbar^\Psi(x,p) &= \frac{2}{\hbar}\int_\R \overline{\Psi(x+y)}\Psi(x-y) e^{2ip\cdot y/\hbar}dy\\
    &= \frac{1}{\hbar}\inner{\Psi}{\Omega_\hbar^W(x,p)\Psi},
\end{align}
where $\Omega_\hbar^W(x,p)$ is the linear operator
\begin{align}
\label{eq:Omega}
\big(\Omega_\hbar^W(x,p)\Psi\big)(y) = 2e^{2ip(y-x)/\hbar}\Psi(2x-y)
\end{align}
for all $\Psi\in L^2(\R)$.
Putting together Weyl quantization and the Wigner function, the quasi-probability density can be used to mimic classical expectation values for observables that agree with the quantum expectation values of their quantized counterparts by satisfying the equality
\begin{align}
    \int_{\R^2} f(x,p)W_\hbar^\Psi(x,p)  dxdp = \left\langle \Psi,\mathcal{Q}_\hbar(f)\Psi\right\rangle
\end{align}
for all $\Psi\in L^2(\R)$ and $f\in C_c^\infty(\R^2)$.\footnote{We use the convention of anti-linearity on the left, i.e. $\langle a \psi, \varphi \rangle = \overline a \langle \psi, \varphi \rangle $ where $\overline a$ is the complex conjugate of $a \in \C$.}  If there is a probability measure $\mu_0^\Psi$ on $\R^2$ such that
\begin{align}
    \lim_{\hbar\to 0} \int_{\R^2} f(x,p) W_\hbar^\Psi(x,p) dxdp = \int_{\R^2} f(x,p) d\mu_0^\Psi,
\end{align}
for all $f\in C_c^\infty(\R^2)$ (i.e., $W_\hbar^\Psi dxdp$ approaches $\mu_0^\Psi$ in the weak* topology), then we will call $\mu_0^\Psi$ the \emph{classical limit} of the quantum state $\Psi$.

One useful feature of Weyl-Wigner quantization for analyzing the harmonic oscillator is that it commutes with the corresponding time evolution in the sense that
\begin{align}
\label{eq:dyncommutes_0}W_\hbar^{\Psi^\hbar_{m,\omega}(t)}(x,p) = W_\hbar^\Psi(x_{m,\omega}(-t),p_{m,\omega}(-t))
\end{align}
for all $t\in\R$, where the time evolution occurs with fixed values for the mass $m$ and the frequency $\omega$.  Here, $\Psi^\hbar_{m,\omega}(t)$ represents the time-evolution of the initial quantum state $\Psi$ by the Schr\"{o}dinger equation for some time $t>0$, whereas the phase space point $(x_{m,\omega}(-t),p_{m,\omega}(-t))$ represents the reversed time evolution of the initial classical state $(x,p)$ by the classical harmonic oscillator dynamics for the same time $t>0$.  That is, $(x_{m,\omega}(-t),p_{m,\omega}(-t))$ is the classical state that would evolve into $(x,p)$ after a time $t$ under the corresponding classical harmonic oscillator dynamics.  We now explicitly include the subscripts $m$ and $\omega$ on the right hand side to denote that the classical time evolution uses the dynamics with the same fixed value of the mass $m$ and the frequency $\omega$ that appears on the left hand side.

To apply the method of arbitrary functions to the quantum harmonic oscillator, we analyze both the limit of long times and the classical limit under the supposition that the mass $m$ and the frequency $\omega$ are random variables.  We suppose that $m\omega$ is a constant in order to fix the possible classical trajectories.  Moreover, we suppose that the random variable $\omega$ has a density on $\R^+$.  Then, for each point $(x,p)\in\R^2$, the value of the time-evolved Wigner function $W_\hbar^{\Psi^\hbar_{m,\omega}(t)}(x,p)$ is a random variable, where the evolution of $\Psi$ is determined by the values of $m$ and $\omega$.  We have the following result for the limiting behavior of the quantum harmonic oscillator: in the combined classical and long time limits, the Wigner function becomes uniformly distributed over classical orbits in phase space.

\begin{prop}
   Suppose that $\omega$ and $m$ are random variables with a joint distribution, and that $m\omega$ is constant.  Suppose further that the random variable $\omega$ has a density $d\omega$ on $\mathbb{R}^+$.  Let $\Psi\in L^2(\R)$ have a classical limit $\mu_0^\Psi$ with density $W^\Psi_0$ with respect to the Lebesgue measure, so $d\mu_0^\Psi = W^\Psi_0 dxdp$.  Then we have
    \begin{align}
        \lim_{\hbar\to 0}\lim_{t\to\infty}\int_{\R^+}W^{\Psi^\hbar_{m,\omega}(t)}_\hbar(x,p)d\omega = \int_0^{2\pi} W_0^{\Psi}(x_1(t),p_1(t)) dt
    \end{align}
    in the weak* topology.  On the right hand side, $(x_1(t),p_1(t)) := (x_{1,1}(t),p_{1,1}(t))$ denotes the classical time-evolved state with fixed mass $m=1$ and frequency $\omega = 1$ and hence period $2\pi$.
\end{prop}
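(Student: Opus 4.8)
The plan is to verify the stated weak* convergence by pairing with an arbitrary test function $f\in C_c^\infty(\R^2)$ and reducing the whole computation to Theorem~\ref{thm:inftime} together with the covariance identity~\eqref{eq:dyncommutes_0}. The first step is to use \eqref{eq:dyncommutes_0} to replace the time-evolved Wigner function by a transported copy of the initial one, so that the $\omega$-average becomes
\begin{align}
\int_{\R^+} W^{\Psi^\hbar_{m,\omega}(t)}_\hbar(x,p)\,d\omega = \int_{\R^+} W^\Psi_\hbar\big(\Phi^{m,\omega}_{-t}(x,p)\big)\,d\omega,
\end{align}
where $\Phi^{m,\omega}_s$ is the classical harmonic-oscillator flow for time $s$. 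The structural point that makes the method work is that, since $m\omega$ is held fixed, $\Phi^{m,\omega}_s$ is a single ($\omega$-independent) linear symplectic map $R_{\omega s}$ depending on $s$ and $\omega$ only through the product $\omega s$, with $\phi\mapsto R_\phi$ being $2\pi$-periodic; under the normalization $m\omega=1$ it is the ordinary phase-space rotation, whose orbit through $(x,p)$ is traced out by $(x_1(t),p_1(t))$ for $t\in[0,2\pi]$. Since $W^\Psi_\hbar$ is bounded (by $\tfrac{2}{\hbar}\norm{\Psi}^2$) and continuous for $\Psi\in L^2(\R)$, for each fixed $(x,p)$ the map $\omega\mapsto W^\Psi_\hbar(R_{-\omega t}(x,p))$ is a bounded continuous $2\pi$-periodic function of $\omega t$.

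Second, I would take $t\to\infty$ at fixed $\hbar$. Pairing the displayed average against $f$ and applying dominated convergence on the $(x,p)$-integral (the integrand is dominated by $\tfrac{2}{\hbar}\norm{\Psi}^2\abs{f}\in L^1$), it suffices to prove the pointwise statement that, for each $(x,p)$,
\begin{align}
\lim_{t\to\infty}\int_{\R^+} W^\Psi_\hbar\big(R_{-\omega t}(x,p)\big)\,d\omega = \frac{1}{2\pi}\int_0^{2\pi} W^\Psi_\hbar\big(R_{-\phi}(x,p)\big)\,d\phi,
\end{align}
and this is precisely Theorem~\ref{thm:inftime} applied to the bounded function $\phi\mapsto W^\Psi_\hbar(R_{-\phi}(x,p))$ on the circle, using that variation-distance convergence of the law of $(\omega t)\bmod 2\pi$ to the uniform law controls all bounded measurable test functions. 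A Fubini step (valid since $W^\Psi_\hbar$ is bounded and $f\in L^1$) then rewrites the $t\to\infty$ limit of the tested quantity as $\int_{\R^2}\bar f(y)\,W^\Psi_\hbar(y)\,dy$, where $\bar f(y):=\tfrac{1}{2\pi}\int_0^{2\pi} f(R_{-\phi}y)\,d\phi$ is the average of $f$ over the classical orbit through $y$; the crucial observation is that $\bar f\in C_c^\infty(\R^2)$ (smoothness by differentiating under the integral, compact support because the orbits are bounded).

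Third, I would take $\hbar\to 0$. Since $\Psi$ has classical limit $\mu_0^\Psi=W_0^\Psi\,dxdp$ and $\bar f\in C_c^\infty(\R^2)$, the definition of the classical limit gives $\int_{\R^2}\bar f\,W^\Psi_\hbar\,dy\to\int_{\R^2}\bar f\,W_0^\Psi\,dy$. Unfolding $\bar f$ by Fubini once more (legitimate now that $W_0^\Psi\in L^1$), substituting $(x,p)=R_\phi y$ (Jacobian $1$, since $R_\phi$ is symplectic), and using $2\pi$-periodicity, this becomes $\int_{\R^2} f(x,p)\big[\tfrac{1}{2\pi}\int_0^{2\pi} W_0^\Psi(x_1(t),p_1(t))\,dt\big]\,dxdp$, i.e.\ $f$ integrated against the orbit average of $W_0^\Psi$ with respect to the uniform probability measure on the period. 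Since $f\in C_c^\infty(\R^2)$ was arbitrary, this is the asserted weak* limit.

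The main obstacle is not a single hard estimate but the careful bookkeeping of the nested limits and the several interchanges of integration: one must check that the inner limit $t\to\infty$ is performed at fixed $\hbar$ and yields a bona fide $\hbar$-dependent object to which the classical limit then applies; that the Wigner functions---which need not be integrable over $\R^2$---are always handled through the $C_c^\infty$/$L^2$ pairing (equivalently via $\inner{\Psi}{\mathcal{Q}_\hbar(\cdot)\Psi}$) so that every integral is well-defined; and that the $\omega$-independence and $2\pi$-periodicity of the classical flow, which follow precisely from $m\omega$ being constant, are invoked correctly so that Theorem~\ref{thm:inftime} applies to a genuine scalar function on the circle. A minor but real point is the normalization: the uniform limit carries a factor $1/2\pi$, so the orbit average on the right-hand side should be read with respect to the uniform \emph{probability} measure on $[0,2\pi]$.
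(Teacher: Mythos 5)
Your proposal is correct and follows essentially the same route as the paper: the covariance identity (\ref{eq:dyncommutes_0}) to transport the dynamics onto the classical flow, Thm.~\ref{thm:inftime} to turn the $\omega$-average into an orbit average as $t\to\infty$, and then the classical limit of $\Psi$; the only real variation is that you push the orbit average onto the test function, forming $\bar f\in C_c^\infty(\R^2)$ so that the $\hbar\to 0$ step reduces directly to the definition of the classical limit, whereas the paper instead applies dominated convergence in $\hbar$ to the iterated integral using the uniform bound $\sup_{\hbar\in[0,1]}\norm{\mathcal{Q}_\hbar(f)}_\hbar$. Your closing remark about the $1/2\pi$ normalization is also well taken: the orbit average must be taken with respect to the uniform \emph{probability} measure on $[0,2\pi]$, a factor the displayed right-hand side of the proposition suppresses.
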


\begin{proof}
Consider $\Psi\in L^2(\R)$ with a classical limit denoted as in the statement of the proposition.  Note that generically, we have
\begin{align}
    \label{eq:HOdynamics}
    x_{m,\omega}(t) = x \cos(\omega t) + \frac{p}{m\omega} \sin(\omega t) && p_{m,\omega}(t) = -m\omega x \sin(\omega t) + p \cos(\omega t).
\end{align}
So we have for any $f\in C_c^\infty(\R^2)$,
\begin{align}
\lim_{t\to\infty}\int_{\R^2}\int_{\R^+} f(x,p)W_\hbar^{\Psi^\hbar_{m,\omega}(t)}(x,p)  d\omega dx dp &= \lim_{t\to\infty}\int_{\R^2}\int_{\R^+} f(x,p)W_\hbar^{\Psi}(x_{m,\omega}(-t),p_{m,\omega}(-t)) d\omega dx dp
\label{eq:dyncommutes}\\
&=\int_{\R^2}\lim_{t\to\infty}\int_{\R^+}f(x,p)W_\hbar^{\Psi}(x_{m,\omega}(-t),p_{m,\omega}(-t)) d\omega dx dp\label{eq:dominatedconvergence}\\
&=\int_{\R^2}\int_{0}^{2\pi} f(x,p)W_\hbar^\Psi(x_1(t),p_1(t)) dt dx dp\label{eq:HOinftime}\\
&=\int_{0}^{2\pi}\int_{\R^2}f(x,p)W_\hbar^\Psi(x_1(t),p_1(t)) dx dp dt\label{eq:fubini}.
\end{align}
In lines (\ref{eq:HOinftime})-(\ref{eq:fubini}), $\omega=1$ becomes a constant, whose sole role is to determine the period $2\pi$.  Line (\ref{eq:dyncommutes}) is justified by Eq. (\ref{eq:dyncommutes_0}), while line (\ref{eq:dominatedconvergence}) is justified by the dominated convergence theorem since, for fixed $\hbar\in (0,1]$, the integrand is bounded by
\begin{align*}
\left|\int_{\R^+}f(x,p)W_\hbar^{\Psi}(x_{m,\omega}(-t),p_{m,\omega}(-t)) d\omega\right|\leq \frac{2}{\hbar} \cdot \sup_{(x,p)\in \R^2} |f(x,p)|. 
\end{align*}
The crucial step is line (\ref{eq:HOinftime}), which is justified by Thm. \ref{thm:inftime}---the fact that the random variable $(\omega t) (\text{mod }2\pi)$ approaches a uniform distribution on $[0,2\pi]$ implies that expectation values of the classical periodic motion in Eq. (\ref{eq:HOdynamics}) can be determined by averaging over a period. Since the result is an average over the period of trigonometric functions we can replace $-t$ with $t$. Finally, line (\ref{eq:fubini}) is obtained from the Fubini-Tonelli theorem since, for fixed $\hbar \in (0,1]$, the integral is bounded by
\begin{align}
\int_{\R^2}\int_{0}^{2\pi}\abs{f(x,p)W_\hbar^\Psi(x_1(t),p_1(t))}dtdxdp\leq \frac{2 \pi}{\hbar} \int_{\R^2}\abs{f(x,p)} dxdp,
\end{align}
which is finite since $f\in C_c^\infty(\R^2)$.

We also know that $\hbar\mapsto \norm{\mathcal{Q}_\hbar(f)}_\hbar$ is continuous, and hence bounded for $\hbar\in [0,1]$, so we can apply the dominated convergence theorem to find that
\begin{align}
\lim_{\hbar\to 0}\lim_{t\to\infty}\int_{\R^2}\int_{\R^+} f(x,p)W_\hbar^{\Psi^\hbar_{m,\omega}(t)}(x,p)  d\omega dx dp &=
\lim_{\hbar\to 0}\int_{0}^{2\pi}\int_{\R^2} f(x,p)W_\hbar^\Psi(x_1(t),p_1(t)) dxdpdt\\
&= \int_{0}^{2\pi}\int_{\R^2}f(x,p)W_0^\Psi(x_1(t),p_1(t)) dxdpdt
\end{align}
since the integrand is bounded by
\begin{align}
    \Big|\int_{\R^2}f(x,p)W_\hbar^\Psi(x_1(t),p_1(t))dxdp\Big| \leq \sup_{\hbar\in [0,1]}\norm{\mathcal{Q}_\hbar(f)}_\hbar.
\end{align} 
This is what we set out to show.  (In fact, it follows from what we have shown that the $t\to\infty$ and $\hbar\to 0$ limits commute in this case.)
\end{proof}

This result provides a first application of the method of arbitrary functions to quantum systems.  The example illustrates how one can control the limit of long times $t\to\infty$ and the classical limit $\hbar\to 0$ so as to extract universal limiting behavior.  The behavior is universal in that it captures a wide range of initial distributions over the unknown frequency $\omega$, all of which will approach the same limiting distribution in long times.  Physically, we see that the $t\to\infty$ limit yields averages over periodic motions, while the $\hbar\to 0$ limit suppresses quantum interference terms. 
 For example, if $\Psi$ is a coherent state, then the Wigner function $W^\Psi_\hbar$ is a Gaussian whose width is proportional to $\hbar$.  So while the classical limit $W_0^\Psi$ is a delta function at a point $(x,p)$ in phase space, the combined classical and long time limits smear out $W_0^\Psi$ to become a uniform distribution over the classical orbit through $(x,p)$ for the fixed value of $m\omega$.  We emphasize that since the classical and long time limits commute in this case, the result holds even when the time evolution is taken according to the quantum dynamics.  Employing the quantum, rather than classical, equations of motion is our central conceptual innovation beyond the classical method of arbitrary functions.
 
 We have shown that under certain assumptions, the combined $t\to\infty$ and $\hbar\to 0$ limits of the quantum probabilities leads to universal behavior for the quantum harmonic oscillator.  In this case, the universal behavior we point to is the resulting uniform distributions along classical trajectories.  Indeed, we wish to show that for a toy model of quantum measurement, the same limits produce a different kind of universal behavior.  In the models we consider next, the method of arbitrary functions along with the $t\to \infty $ and $\hbar\to 0$ limits yield the Born rule probabilities as universal probability values for the measurement outcomes.  We now turn to the toy model of quantum measurement governed by the double well potential.

\section{Probability from the Flea}
\label{sec:prob}

In this section, we will show how Born rule measurement probabilities can be derived from Schr\"odinger time evolution in the toy model of quantum measurement due to \citet{LaRe13}, given certain auxiliary assumptions. Recall that Landsman and Reuvers model quantum measurement using the one-dimensional symmetric double well potential with the ``flea" perturbation $\delta$:
\begin{align}
    V_\delta(x) = \frac{1}{4}\lambda(x^2-a^2)^2 +\delta(x),
\end{align}
where $\delta$ is a smooth, compactly supported function. The perturbed double well potential is intended to represent the combined measured system and apparatus in a binary measurement which can take values $x = \pm a$. The values of $x$ away from $\pm a$ represent other configurations the system can take in principle, while the measurement outcomes correspond only to the minima of the wells. Landsman and Reuvers show that, given a system whose initial state is approximately a superposition of Gaussians centered on the two wells, taking the $t \to \infty$ (with the perturbed double well dynamics) and $\hbar\to 0$ limits of this state  results in either a state totally localized at $x=+a$ or one totally localized at $x=-a$. With their toy model, Landsman and Reuvers provide a mechanism that captures one aspect of quantum measurement: wavefunction collapse.

In this section, we aim to capture the other notable aspect of quantum measurement: that probabilities of measurement outcomes are determined by the Born rule. We use the same perturbed double well model as Landsman and Reuvers, but we consider a larger class of initial states.  We now present some general features of the system to establish background and notation.  As discussed in Section \ref{sec:meas}, symmetric and anti-symmetric superpositions of the ground state $\Psi^{(0)}_\hbar$ and first excited state $\Psi^{(1)}_\hbar$ result in states which are approximately localized in the right and left wells, respectively:
\begin{align} \label{eqn:loc}
    \Psi^\pm_\hbar = \frac{\Psi^{(0)}_\hbar \pm \Psi^{(1)}_\hbar}{\sqrt{2}}.
\end{align}
Many of the qualitative features of the double well and the perturbed double well models follow from the phenomenon of \emph{asymptotic degeneracy}, which has been studied in detail in double well systems \citep{Ha78,Ha80,HeSj84,JoMaSc81,JoMaSc81a,Si85}: letting $\Delta_\hbar = E_\hbar^{(1)}-E_\hbar^{(0)}$ denote the difference of the energy eigenvalues for the eigenstates $\Psi_\hbar^{(1)}$ and $\Psi_\hbar^{(0)}$ of the unperturbed Hamiltonian, we have: 
\begin{align} \label{eqn:Delta_hbar}
    \Delta_\hbar \sim \hbar \sqrt{\frac{2a^2\lambda}{e \pi}} e^{-d_V/\hbar} \enspace (\hbar \to 0),
\end{align}
where $d_V$ is the typical WKB-factor
\begin{align}
    d_V = \int_{-a}^{a} dx \sqrt{V(x)}.
\end{align}
We will consider initial states that are arbitrary superpositions of $\Psi^+_\hbar$ and $\Psi^-_\hbar$, i.e.,
\begin{align}
    \Psi_\hbar(t=0) = \alpha \Psi^+_\hbar + \beta \Psi^-_\hbar. \label{eqn: definition of approximately localized states}
\end{align}
This includes the ground state for $\alpha = \beta = \tfrac{1}{\sqrt{2}}$, which Landsman and Reuvers consider, but also includes other states, as well. The $\hbar \to 0$ limits of the two approximately localized states $\Psi^\pm_\hbar$, result in classical states completely localized at $x=\pm a$. So, we will interpret each state as approximating one which, upon measurement, returns $x = \pm a$ with certainty (at least relative to some error bounds).  With this in mind, the Born rule tells us that the superposition state $\Psi_\hbar(t=0)$ above should result in a measurement outcome of $x=+a$ with probability $|\alpha|^2$ and a measurement outcome of $x=-a$ with probability $|\beta|^2$. 

We will show that this is exactly the case when we allow the ``flea" perturbation $\delta$ to be a random variable and the probabilities above to capture the results of measurements involving different fleas. That is, given an arbitrary probability distribution over $\delta$ within a certain class, taking the $t \to \infty$ and $\hbar \to 0$ limits of $\Psi_\hbar(t=0)$ under the $\delta$-perturbed double well time evolution, results in a state localized at $x=+a$ for a set of fleas $\delta$ with probability $|\alpha|^2$ and $x=-a$ for a set of fleas $\delta$ with probability $|\beta|^2$. This is the sense in which Landsman and Reuvers' approach to measurement can also capture the Born rule probabilities. We will eventually show that this holds in the double well toy model, but we begin in \S\ref{subsec:twostate} with a two-state approximation to that model, which will illustrate some of the essential features of our analysis. We will consider the full double well in \S \ref{subsec:doublewell}.

\subsection{Two State Model}
\label{subsec:twostate}
Following Landsman and Reuvers, we begin by working with the “truncated” Hilbert space $\mathbb{C}^2$ spanned by the two lowest energy eigenstates of the system.  To do so, we replace the two approximately localized states $\Psi^-_\hbar$ and $\Psi^+_\hbar$ with 
\begin{align}
    \Phi^- = 
    \begin{pmatrix}
        1 \\ 0
    \end{pmatrix},
    \enspace
    \Phi^+ = 
    \begin{pmatrix}
        0 \\ 1
    \end{pmatrix}
\end{align}
in $\mathbb{C}^2$.  The observables for the system span the C*-algebra $M_2(\mathbb{C})$ of $2\times 2$ matrices. 
 We consider the (unperturbed) Hamiltonian 
\begin{align}
    H_{0,\hbar} = 
        \frac{1}{2}
        \begin{pmatrix}
            0 & -\Delta_\hbar \\
            -\Delta_\hbar & 0 \\
        \end{pmatrix}
\end{align}
where $\Delta_\hbar$ is the difference in the energies of $\Psi^{(1)}_\hbar$ and $\Psi^{(0)}_\hbar$, the energy eigenstates of the double well potential, which we assume has the form given by Eq. (\ref{eqn:Delta_hbar}).  The energy eigenstates of the unperturbed two state Hamiltonian $H_{0,\hbar}$ are
\begin{align}
    \Phi^{(0)}_0 = 
    \frac{1}{\sqrt{2}}
    \begin{pmatrix}
        1 \\ 1
    \end{pmatrix},
    \enspace 
    \Phi^{(1)}_0 = 
    \frac{1}{\sqrt{2}}
    \begin{pmatrix}
        1 \\ -1
    \end{pmatrix}
\end{align}
with respective energies $E^{(0)}_0=-\tfrac{1}{2}\Delta_\hbar$ and $E^{(1)}_0=\tfrac{1}{2}\Delta_\hbar$. These energy eigenstates satisfy the following relationship with the localized states 
\begin{align} 
    \Phi^\pm = \frac{\Phi^{(0)}_0 \mp \Phi^{(1)}_0}{\sqrt{2}}, \label{eqn:two state system localized states}
\end{align}
which mimics the relationship between the two lowest energy states of the double well and the approximately localized states in Equation (\ref{eqn:loc}). The subscripts on the states $\Phi^{(0)}_0$ and $\Phi^{(1)}_0$ signify that they are eigenstates corresponding to the unperturbed Hamiltonian $H_{0,\hbar}$ above.

Now consider a new perturbed Hamiltonian in which a positive real number $\delta$---a “flea”---has been added to the left well, i.e. 
\begin{align} \label{eqn:perturbed hamiltonian}
    H_{\delta,\hbar} = 
        \begin{pmatrix}
            \delta & -\tfrac{1}{2} \Delta_\hbar \\
            -\tfrac{1}{2} \Delta_\hbar & 0 \\
        \end{pmatrix}.
\end{align}
For small $\hbar$, the value of $\delta$ will be large compared to $\Delta_\hbar$ so even though we refer to $\delta$ as a ``perturbation", standard perturbation theory techniques are not applicable.  The resulting energy eigenvalues are \citep[p. 391]{LaRe13} 
\begin{align}
    E^{(0)}_{\delta,\hbar} = \frac{1}{2} \Big( \delta - \sqrt{\delta^2 + \Delta_\hbar^2}\Big) && E^{(1)}_{\delta,\hbar} = \frac{1}{2} \Big( \delta + \sqrt{\delta^2 + \Delta_\hbar^2}\Big).
\end{align}
The energy difference is 
\begin{align}
    \Delta_{\delta,\hbar} = E^{(1)}_{\delta,\hbar} - E^{(0)}_{\delta,\hbar} = \sqrt{\delta^2 + \Delta_\hbar^2}
\end{align} and the corresponding energy eigenstates are
\begin{align}
    \Phi^{(0)}_{\delta,\hbar} = \frac{1}{\sqrt{2}} \Big( \delta^2 + \Delta_\hbar^2 + \delta \sqrt{\delta^2 + \Delta_\hbar^2} \Big)^{-1/2}
        \begin{pmatrix}
            \Delta_\hbar \\
            \delta + \sqrt{\delta^2 + \Delta_\hbar^2}
        \end{pmatrix};
\end{align}
\begin{align}
    \Phi^{(1)}_{\delta,\hbar} = \frac{1}{\sqrt{2}} \Big( \delta^2 + \Delta_\hbar^2 - \delta \sqrt{\delta^2 + \Delta_\hbar^2} \Big)^{-1/2}
        \begin{pmatrix}
            \Delta_\hbar \\
            \delta - \sqrt{\delta^2 + \Delta_\hbar^2}
        \end{pmatrix}.
\end{align}
Notice that $\lim_{\delta \to 0} \Phi^{(i)}_{\delta,\hbar} = \Phi^{(i)}_0$ for $i=0,1$ as we should expect.

As $\hbar \to 0$, $\Delta_\hbar$ rapidly approaches 0 according to Equation (\ref{eqn:Delta_hbar}), and it can be shown that given any $\delta>0$, 
\begin{align}
    \lim_{\hbar \to 0} \Phi^{(0)}_{\delta, \hbar} = \Phi^+, \quad \lim_{\hbar \to 0} \Phi^{(1)}_{\delta,\hbar} = \Phi^-. \label{eqn:convergence of approx localized states}
\end{align}
For $\delta < 0$, these limits are swapped. These results are sensible since the former says that adding a positive perturbation in the left well leads to a ground state which is approximately localized in the right well. They also match onto the eigenvectors of the perturbed Hamiltonian $H_{\delta,\hbar}$ but with 0 substituted for $\Delta_\hbar$. So far, this model is identical to the one treated in \citet{LaRe13}, but we will introduce a novel feature by allowing $\delta$ to be a random variable instead of a real-valued parameter.

Now consider an initial state $\Psi(0) = \alpha \Phi^+ + \beta \Phi^-$ for $\alpha,\beta\in\mathbb{C}$, which at $t=0$ undergoes a measurement of the ``position" observable 
\begin{align}
    Q = \begin{pmatrix}
        -1 & 0 \\
        0 & 1 \\
    \end{pmatrix},
\end{align}
i.e. the observable which assigns the outcome $Q=-1$ to the left well state $\Phi^-$ and the outcome $Q=+1$ to the right well state $\Phi^+$. According to conventional quantum mechanics, the Born rule tells us that a measurement of $Q$ will return $+1$ or $-1$ with respective probabilities
\begin{align}
    P(Q=+1) = |\alpha|^2 && P(Q=-1) = |\beta|^2.
\end{align}
We can describe this Born rule result as the probabilistic mixture
\begin{align} \omega_B = |\alpha|^2 \omega_+ + |\beta|^2 \omega_-
\end{align}
where $\omega_+$ and $\omega_-$ are the expectation value assignments
\begin{align}
\omega_+(A) = \inner{\Phi^+}{A\Phi^+} && \omega_-(A) = \inner{\Phi^-}{A\Phi^-}
\end{align}
for all $A\in M_2(\mathbb{C})$.

The following proposition says that we can recover the Born rule state $\omega_B$ from the ``flea model" of measurement if we take a probabilistic approach to the flea perturbation, under which $\delta$ is a random variable that is constant in time. Each value of $\delta$ determines a distinct time evolution of the initial state $\Psi(0)$ to
\begin{align*}
    \Psi_{\delta,\hbar}(t) = e^{-iH_{\delta,\hbar}t/\hbar}\Psi(t=0).
\end{align*}
The uncertain final state at time $t$ can thus be understood by averaging over different values of $\delta$, weighted by a probability distribution governing the perturbations. We encode this information by considering the time evolved state at time $t$ as a mixed state, i.e. a probabilistic mixture, of the algebraic states of the time evolved pure state for each $\delta$-dependent time evolution. In short, we consider a probability distribution $\mu$ over values of $\delta$, and evaluate the mixed state which combines the various $\delta$-dependent time evolutions of the initial state $\Psi(0)$. We will show that in the $t \to \infty$ and $\hbar \to 0$ limits, the resulting state approaches the Born rule state $\omega_B$ above.

\begin{prop} \label{prop:two state approximation}
Let $\mu$ be a probability measure over real nonzero values of $\delta$ that has a density with respect to the Lebesgue measure over values $\delta\in\R$.  Let $\Psi(0)$ be the initial state
\begin{align}
    \Psi(0) = \alpha \Phi^+ + \beta \Phi^-
\end{align}
for $\alpha,\beta\in\mathbb{C}$ with $|\alpha|^2+|\beta|^2=1$. 
 Consider the time-evolved state $\Psi_{\delta,\hbar}(t) = e^{-iH_{\delta,\hbar}t/\hbar}\Psi(0)$ with the corresponding expectation value assignment $\omega_{\Psi_{\delta,\hbar}(t)}(A) = \inner{\Psi_{\delta,\hbar}(t)}{A\Psi_{\delta,\hbar}(t)}$ for all $A\in M_2(\C)$ . Then the mixed state
\begin{align}
    \omega_\mu^t \equiv \int \omega_{\Psi_{\delta,\hbar}(t)} d \mu(\delta)
\end{align}
approaches the Born rule state $\omega_B$ in the weak* topology under the $t \to \infty$ and then $\hbar \to 0$ limits. That is, for all $A \in M_2(\mathbb{C})$
\begin{align}
\label{eq:twostateBorn}
    \lim_{\hbar \to 0} \lim_{t \to \infty} \big| \omega_\mu^t(A) - \omega_B(A) \big| = 0
\end{align}
\end{prop}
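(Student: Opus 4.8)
The plan is to diagonalize the $\delta$-dependent time evolution in the perturbed energy eigenbasis, split $\omega_{\Psi_{\delta,\hbar}(t)}(A)$ into a part that is constant in $t$ and a part that oscillates in $t$, kill the oscillating part in the $t\to\infty$ limit by invoking Theorem~\ref{thm:inftime} (this is where the method of arbitrary functions enters), and then take $\hbar\to 0$ using the known convergence of the perturbed eigenvectors from Eq.~(\ref{eqn:convergence of approx localized states}).

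First, fix $\delta\neq 0$ and $\hbar>0$. Since $H_{\delta,\hbar}$ is Hermitian with orthonormal eigenbasis $\{\Phi^{(0)}_{\delta,\hbar},\Phi^{(1)}_{\delta,\hbar}\}$, write $\Psi(0)=c_0\Phi^{(0)}_{\delta,\hbar}+c_1\Phi^{(1)}_{\delta,\hbar}$ with $c_j=c_j(\delta,\hbar)=\inner{\Phi^{(j)}_{\delta,\hbar}}{\Psi(0)}$, so that $|c_0|^2+|c_1|^2=1$, and abbreviate $a_{jk}=a_{jk}(\delta,\hbar):=\inner{\Phi^{(j)}_{\delta,\hbar}}{A\Phi^{(k)}_{\delta,\hbar}}$, noting $|a_{jk}|\le\norm{A}$. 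Expanding the time-evolved expectation value in this eigenbasis,
\[
\omega_{\Psi_{\delta,\hbar}(t)}(A)=|c_0|^2 a_{00}+|c_1|^2 a_{11}+\overline{c_1}\,c_0\,a_{10}\,e^{i\Delta_{\delta,\hbar}t/\hbar}+\overline{c_0}\,c_1\,a_{01}\,e^{-i\Delta_{\delta,\hbar}t/\hbar},
\]
where $\Delta_{\delta,\hbar}=\sqrt{\delta^2+\Delta_\hbar^2}$. Integrating against $\mu$, the first two terms are independent of $t$, while the last two are oscillatory integrals in $t$.

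The key step is to show $\int_\R \overline{c_0}\,c_1\,a_{01}\,e^{-i\sqrt{\delta^2+\Delta_\hbar^2}\,t/\hbar}\,d\mu(\delta)\to 0$ as $t\to\infty$ for fixed $\hbar>0$ (and similarly for the conjugate term). Split the integral over $\{\delta>0\}$ and $\{\delta<0\}$; on each half-line the map $\delta\mapsto u:=\sqrt{\delta^2+\Delta_\hbar^2}$ is a smooth bijection onto $(\Delta_\hbar,\infty)$, so changing variables pushes $\mu$ restricted to that half-line forward to a measure on $\R^+$ with Lebesgue density $\rho(u)=g(\sqrt{u^2-\Delta_\hbar^2})\,u/\sqrt{u^2-\Delta_\hbar^2}$ (where $g$ is the density of $\mu$); the Jacobian is integrable near $u=\Delta_\hbar$ and total mass is preserved, so $\rho\in L^1(\R^+)$. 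Theorem~\ref{thm:inftime} then applies: since $\big((u\,t/\hbar)\bmod 2\pi\big)$ equidistributes on $[0,2\pi]$ for any density on $\R^+$, the averages of $\cos(ut/\hbar)$, $\sin(ut/\hbar)$, hence of $e^{iut/\hbar}$, against $\rho$ tend to $0$; absorbing the bounded weight $\overline{c_0}c_1 a_{01}$ by decomposing the complex measure $(\overline{c_0}c_1 a_{01})\rho\,du$ into real/imaginary and positive/negative parts (each a finite measure with a Lebesgue density on $\R^+$) — or equivalently by the Riemann--Lebesgue lemma applied to the $L^1$ function $(\overline{c_0}c_1 a_{01})\rho$ — gives the claim. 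Therefore
\[
\lim_{t\to\infty}\omega_\mu^t(A)=\int_\R\big(|c_0|^2 a_{00}+|c_1|^2 a_{11}\big)\,d\mu(\delta).
\]

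Finally, take $\hbar\to 0$ inside the integral by dominated convergence (the integrand is bounded by $\norm{A}$ and $\mu$ is a probability measure). For $\delta>0$, Eq.~(\ref{eqn:convergence of approx localized states}) gives $\Phi^{(0)}_{\delta,\hbar}\to\Phi^+$ and $\Phi^{(1)}_{\delta,\hbar}\to\Phi^-$, so $c_0\to\inner{\Phi^+}{\Psi(0)}=\alpha$, $c_1\to\inner{\Phi^-}{\Psi(0)}=\beta$, $a_{00}\to\inner{\Phi^+}{A\Phi^+}=\omega_+(A)$, $a_{11}\to\omega_-(A)$, and the integrand converges to $|\alpha|^2\omega_+(A)+|\beta|^2\omega_-(A)=\omega_B(A)$; for $\delta<0$ the two eigenvector limits swap, so $c_0\to\beta$, $c_1\to\alpha$, and the integrand converges to $|\beta|^2\omega_-(A)+|\alpha|^2\omega_+(A)=\omega_B(A)$ again. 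Since $\mu(\{0\})=0$, the integrand converges to the constant $\omega_B(A)$ for $\mu$-a.e.\ $\delta$, hence the integral converges to $\omega_B(A)$, which is Eq.~(\ref{eq:twostateBorn}). I expect the main obstacle to be the third paragraph: rigorously establishing that the oscillatory contribution vanishes, in particular handling the change of variables through the non-monotone phase $\sqrt{\delta^2+\Delta_\hbar^2}$, the mild singularity of the resulting density at $u=\Delta_\hbar$, and confirming that Theorem~\ref{thm:inftime} (or, equivalently, Riemann--Lebesgue) genuinely applies despite the extra bounded $\delta$-dependent weight multiplying the exponential.
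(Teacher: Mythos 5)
Your proposal is correct and follows essentially the same route as the paper: expand in the perturbed eigenbasis, kill the oscillatory cross terms in the $t\to\infty$ limit via Theorem~\ref{thm:inftime}, then take $\hbar\to 0$ by dominated convergence using Eq.~(\ref{eqn:convergence of approx localized states}), treating positive and negative $\delta$ separately. The only differences are organizational (you work with a general $A$ rather than the basis $\Pi_\pm, A_{+-}, A_{-+}$, and fold the $\pm\delta$ split into the final integral rather than a convex combination of measures), and your explicit change of variables $u=\sqrt{\delta^2+\Delta_\hbar^2}$ with the Riemann--Lebesgue argument for the weighted oscillatory integral is actually more careful than the paper's brief appeal to Theorem~\ref{thm:inftime} at that step.
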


\begin{proof}
First, we will treat the case where $\mu$ has support only over positive values of $\delta$. Arbitrary probability measures (with a density) will be treated at the end of the proof. By definition,
\begin{align}
    \lim_{\hbar \to 0} \lim_{t \to \infty} \big| \omega_\mu^t(A) - \omega_B(A) \big|
    &= \lim_{\hbar \to 0} \lim_{t \to \infty} \bigg | \int \omega_{\Psi_{\delta,\hbar}(t)}(A) d \mu(\delta) - \omega_B(A) \bigg | \\
    &= \lim_{\hbar \to 0} \lim_{t \to \infty} \bigg | \int \langle \Psi_{\delta,\hbar}(t), A \Psi_{\delta,\hbar}(t) \rangle  d \mu(\delta) - |\alpha|^2 \omega_+(A) - |\beta|^2 \omega_-(A)  \bigg | \label{eqn:integral of difference}
\end{align}
We now calculate the above limits for a basis of observables in $M_2(\mathbb{C})$.  We consider the basis consisting of: the right well projection operator $\Pi_+ = |\Phi^+ \rangle \langle \Phi^+ |$, the left well projection operator $\Pi_- = |\Phi^- \rangle \langle \Phi^- |$, and the off-diagonal matrices $A_{+-} = | \Phi^+ \rangle \langle \Phi^-|$ and $A_{-+} = | \Phi^- \rangle \langle \Phi^+|$.  We treat the expectation value of each basis element in turn.

First, setting $A$ to the right well projection operator $\Pi_+ = |\Phi^+ \rangle \langle \Phi^+ |$, we get
\begin{align}
    \lim_{\hbar \to 0} \lim_{t \to \infty} \big| &\omega_\mu^t(\Pi_+) - \omega_B(\Pi_+) \big| \\
    &= \lim_{\hbar \to 0} \lim_{t \to \infty} \bigg | \int \langle \Psi_{\delta,\hbar}(t), \Phi^+ \rangle \langle \Phi^+, \Psi_{\delta,\hbar}(t) \rangle  d \mu(\delta) - |\alpha|^2  \bigg | \\
    &= \lim_{\hbar \to 0} \lim_{t \to \infty} \Bigg |  \int \Big| \big \langle \Phi^+, \Phi_{\delta,\hbar}^{(0)} \big \rangle \Big|^2 \cdot \Big | \big \langle  \Phi_{\delta,\hbar}^{(0)} , \Psi(0) \big \rangle \Big|^2 d \mu(\delta) \label{line:schrodinger evolution:start} \\
    & \hspace{.75in} + \int \Big| \big \langle \Phi^+, \Phi_{\delta,\hbar}^{(1)} \big \rangle \Big|^2 \cdot \Big| \big \langle \Phi_{\delta,\hbar}^{(1)}, \Psi(0) \big \rangle \Big |^2 d \mu(\delta) \\
    & \hspace{.75in} + \int \big \langle \Phi_{\delta,\hbar}^{(0)}, \Phi^+ \big \rangle \big \langle \Phi^+, \Phi_{\delta,\hbar}^{(1)} \big \rangle \big \langle \Psi(0), \Phi_{\delta,\hbar}^{(0)} \big \rangle \big \langle \Phi_{\delta,\hbar}^{(1)}, \Psi(0) \big \rangle  e^{- i \Delta_{\delta,\hbar} t/\hbar} d \mu(\delta) \\
    & \hspace{.75in} + \int \big \langle \Phi_{\delta,\hbar}^{(1)}, \Phi^+ \big \rangle \big \langle \Phi^+, \Phi_{\delta,\hbar}^{(0)} \big \rangle \big \langle \Psi(0), \Phi_{\delta,\hbar}^{(1)} \big \rangle \big \langle \Phi_{\delta,\hbar}^{(0)}, \Psi(0) \big \rangle e^{i \Delta_{\delta,\hbar} t/\hbar} d \mu(\delta) - |\alpha|^2 \Bigg| \label{line:schrodinger evolution:end} \\
    &\leq \Bigg | \lim_{\hbar \to 0} \lim_{t \to \infty} \int \Big| \big \langle \Phi^+, \Phi_{\delta,\hbar}^{(0)} \big \rangle \Big|^2 \cdot \Big | \big \langle  \Phi_{\delta,\hbar}^{(0)} , \Psi(0) \big \rangle \Big|^2 d \mu(\delta) \label{line:triangle inequality} \\
    & \hspace{.5in} + \lim_{\hbar \to 0} \lim_{t \to \infty} \int \Big| \big \langle \Phi^+, \Phi_{\delta,\hbar}^{(1)} \big \rangle \Big|^2 \cdot \Big| \big \langle \Phi_{\delta,\hbar}^{(1)}, \Psi(0) \big \rangle \Big |^2 d \mu(\delta) - |\alpha|^2 \Bigg| 
    \\
    & \hspace{.5in} + \Bigg| \lim_{\hbar \to 0} \lim_{t \to \infty} \int \big \langle \Phi_{\delta,\hbar}^{(0)}, \Phi^+ \big \rangle \big \langle \Phi^+, \Phi_{\delta,\hbar}^{(1)} \big \rangle \big \langle \Psi(0), \Phi_{\delta,\hbar}^{(0)} \big \rangle \big \langle \Phi_{\delta,\hbar}^{(1)}, \Psi(0) \big \rangle  e^{- i \Delta_{\delta,\hbar} t/\hbar} d \mu(\delta) \Bigg| \label{line:vanishing integral 1} \\
    & \hspace{.5in} + \Bigg| \lim_{\hbar \to 0} \lim_{t \to \infty} \int  \big \langle \Phi_{\delta,\hbar}^{(1)}, \Phi^+ \big \rangle \big \langle \Phi^+, \Phi_{\delta,\hbar}^{(0)} \big \rangle \big \langle \Psi(0), \Phi_{\delta,\hbar}^{(1)} \big \rangle \big \langle \Phi_{\delta,\hbar}^{(0)}, \Psi(0) \big \rangle e^{i \Delta_{\delta,\hbar} t/\hbar} d \mu(\delta)  \Bigg| \label{line:vanishing integral 2} \\
    &= \Bigg |  \int \lim_{\hbar \to 0} \Big| \big \langle \Phi^+, \Phi_{\delta,\hbar}^{(0)} \big \rangle \Big|^2 \cdot \Big | \big \langle  \Phi_{\delta,\hbar}^{(0)} , \Psi(0) \big \rangle \Big|^2 d \mu(\delta) \label{line:dominated convergence theorem} \\
    & \hspace{.5in} + \int \lim_{\hbar \to 0} \Big| \big \langle \Phi^+, \Phi_{\delta,\hbar}^{(1)} \big \rangle \Big|^2 \cdot \Big| \big \langle \Phi_{\delta,\hbar}^{(1)}, \Psi(0) \big \rangle \Big |^2 d \mu(\delta) - |\alpha|^2 \Bigg| 
    \\
    &= \Bigg | \int \Big| \big \langle \Phi^+, \Phi^+ \big \rangle \Big|^2 \cdot \Big | \big \langle  \Phi^+ , \Psi(0) \big \rangle \Big|^2 d \mu(\delta) + \int \Big| \big \langle \Phi^+, \Phi^- \big \rangle \Big|^2 \cdot \Big| \big \langle \Phi^-, \Psi(0) \big \rangle \Big |^2 d \mu(\delta) - |\alpha|^2 \Bigg| \label{line:convergence of approx localized states}
    \\
    &= \Bigg | |\alpha|^2 \int d \mu(\delta)  - |\alpha|^2 \Bigg | \label{line:evaluate inner products}
    = 0. 
\end{align}
In lines (\ref{line:schrodinger evolution:start})-(\ref{line:schrodinger evolution:end}), we have used the Schr\"odinger evolution of $\Psi(0)$ and the inequality on the next line (\ref{line:triangle inequality}) follows from the triangle inequality. The integrals on lines (\ref{line:vanishing integral 1}) and (\ref{line:vanishing integral 2}) vanish by Thm. \ref{thm:inftime} because for all $\hbar$, $e^{i \Delta_{\delta,\hbar} t/\hbar}$ weakly approaches a uniform distribution over the unit circle as $t \to \infty$. %This argument is elaborated on in Appendix \ref{app:calculations}.
Then, the following equality on line (\ref{line:dominated convergence theorem}) follows from the dominated convergence theorem with a constant unit bounding function. Finally, lines (\ref{line:convergence of approx localized states}) and (\ref{line:evaluate inner products}) follow from Equation (\ref{eqn:convergence of approx localized states}) and from evaluating the inner products, respectively. Since $\Pi_- = 1 - \Pi_+$ it is easy to see that Equation (\ref{eq:twostateBorn}) holds for $A = \Pi_-$ as well.

Now consider the off-diagonal matrix $A_{+-} = | \Phi^+ \rangle \langle \Phi^-|$. Using Equation (\ref{eqn:integral of difference}), we have
\begin{align}
    \lim_{\hbar \to 0} \lim_{t \to \infty} \big| &\omega_\mu^t(A_{+-}) - \omega_B(A_{+-}) \big| \\
    &= \lim_{\hbar \to 0} \lim_{t \to \infty} \bigg | \int \langle \Psi_{\delta,\hbar}(t), \Phi^+ \rangle \langle \Phi^-, \Psi_{\delta,\hbar}(t) \rangle  d \mu(\delta) - 0  \bigg | \\
    &= \lim_{\hbar \to 0} \lim_{t \to \infty} \Bigg |  \int \big \langle \Phi_{\delta,\hbar}^{(0)}, \Phi^+ \big \rangle \big \langle \Phi^-, \Phi_{\delta,\hbar}^{(0)} \big \rangle \Big | \big \langle  \Phi_{\delta,\hbar}^{(0)} , \Psi(0) \big \rangle \Big|^2 d \mu(\delta) \label{line:schrodinger evolution2:start} \\
    & \hspace{.75in} + \int \big \langle \Phi_{\delta,\hbar}^{(1)}, \Phi^+ \big \rangle \big \langle \Phi^-, \Phi_{\delta,\hbar}^{(1)} \big \rangle \Big| \big \langle \Phi_{\delta,\hbar}^{(1)}, \Psi(0) \big \rangle \Big |^2 d \mu(\delta) \\
    & \hspace{.75in} + \int \big \langle \Phi_{\delta,\hbar}^{(0)}, \Phi^+ \big \rangle \big \langle \Phi^-, \Phi_{\delta,\hbar}^{(1)} \big \rangle \big \langle \Psi(0), \Phi_{\delta,\hbar}^{(0)} \big \rangle \big \langle \Phi_{\delta,\hbar}^{(1)}, \Psi(0) \big \rangle  e^{- i \Delta_{\delta,\hbar} t/\hbar} d \mu(\delta) \\
    & \hspace{.75in} + \int \big \langle \Phi_{\delta,\hbar}^{(1)}, \Phi^+ \big \rangle \big \langle \Phi^-, \Phi_{\delta,\hbar}^{(0)} \big \rangle \big \langle \Psi(0), \Phi_{\delta,\hbar}^{(1)} \big \rangle \big \langle \Phi_{\delta,\hbar}^{(0)}, \Psi(0) \big \rangle e^{i \Delta_{\delta,\hbar} t/\hbar} d \mu(\delta) \Bigg| \label{line:schrodinger evolution2:end} \\
    &\leq  \bigg | \lim_{\hbar \to 0} \lim_{t \to \infty} \int  \big \langle \Phi_{\delta,\hbar}^{(0)}, \Phi^+ \big \rangle \big \langle \Phi^-, \Phi_{\delta,\hbar}^{(0)} \big \rangle \Big | \big \langle  \Phi_{\delta,\hbar}^{(0)} , \Psi(0) \big \rangle \Big|^2 d \mu(\delta) \bigg| \label{line:triangle inequality2} \\
    & \hspace{.75in} + \bigg| \lim_{\hbar \to 0} \lim_{t \to \infty} \int \big \langle \Phi_{\delta,\hbar}^{(1)}, \Phi^+ \big \rangle \big \langle \Phi^-, \Phi_{\delta,\hbar}^{(1)} \big \rangle \Big| \big \langle \Phi_{\delta,\hbar}^{(1)}, \Psi(0) \big \rangle \Big |^2 d \mu(\delta) \bigg| \\
    & \hspace{.75in} +  \bigg| \lim_{\hbar \to 0} \lim_{t \to \infty} \int \big \langle \Phi_{\delta,\hbar}^{(0)}, \Phi^+ \big \rangle \big \langle \Phi^-, \Phi_{\delta,\hbar}^{(1)} \big \rangle \big \langle \Psi(0), \Phi_{\delta,\hbar}^{(0)} \big \rangle \big \langle \Phi_{\delta,\hbar}^{(1)}, \Psi(0) \big \rangle  e^{- i \Delta_{\delta,\hbar} t/\hbar} d \mu(\delta) \bigg| \label{line:vanishing integral 1.2} \\
    & \hspace{.75in} + \bigg|  \lim_{\hbar \to 0} \lim_{t \to \infty} \int \big \langle \Phi_{\delta,\hbar}^{(1)}, \Phi^+ \big \rangle \big \langle \Phi^-, \Phi_{\delta,\hbar}^{(0)} \big \rangle \big \langle \Psi(0), \Phi_{\delta,\hbar}^{(1)} \big \rangle \big \langle \Phi_{\delta,\hbar}^{(0)}, \Psi(0) \big \rangle e^{i \Delta_{\delta,\hbar} t/\hbar} d \mu(\delta) \bigg| \label{line:vanishing integral 2.2} \\
    &=  \bigg |  \int \lim_{\hbar \to 0} \big \langle \Phi_{\delta,\hbar}^{(0)}, \Phi^+ \big \rangle \big \langle \Phi^-, \Phi_{\delta,\hbar}^{(0)} \big \rangle \Big | \big \langle  \Phi_{\delta,\hbar}^{(0)} , \Psi(0) \big \rangle \Big|^2 d \mu(\delta) \bigg| \label{line:dominated convergence theorem2:start} \\
    & \hspace{.75in} + \bigg| \int \lim_{\hbar \to 0} \big \langle \Phi_{\delta,\hbar}^{(1)}, \Phi^+ \big \rangle \big \langle \Phi^-, \Phi_{\delta,\hbar}^{(1)} \big \rangle \Big| \big \langle \Phi_{\delta,\hbar}^{(1)}, \Psi(0) \big \rangle \Big |^2 d \mu(\delta) \bigg| \label{line:dominated convergence theorem2:end} \\
    &=  \bigg |  \int  \big \langle \Phi^-, \Phi^+ \big \rangle \big \langle \Phi^-, \Phi^+ \big \rangle \Big | \big \langle  \Phi^- , \Psi(0) \big \rangle \Big|^2 d \mu(\delta) \bigg| \label{line:convergence of approx localized states2} \\
    & \hspace{.75in} + \bigg| \int \big \langle \Phi^+, \Phi^+ \big \rangle \big \langle \Phi^-, \Phi^+ \big \rangle \Big| \big \langle \Phi^+, \Psi(0) \big \rangle \Big |^2 d \mu(\delta) \bigg| \\
    &= 0.
\end{align}

The above calculation follows similarly to the previous one. In lines (\ref{line:schrodinger evolution2:start})-(\ref{line:schrodinger evolution2:end}), we have used the Schr\"odinger evolution of $\Psi(0)$ and the triangle inequality to get line (\ref{line:triangle inequality2}). The integrals on lines (\ref{line:vanishing integral 1.2}) and (\ref{line:vanishing integral 2.2}) vanish again by Thm. \ref{thm:inftime}. The dominated convergence theorem gives (\ref{line:dominated convergence theorem2:start}) and (\ref{line:dominated convergence theorem2:end}). Finally, line (\ref{line:convergence of approx localized states2}) follows from Equation (\ref{eqn:convergence of approx localized states}), and the final line follows from evaluating the inner products. 

The limit similarly vanishes for the other off-diagonal matrix $A_{-+}$ according to an analogous argument. Since these four matrices, $\Pi_+, \Pi_-, A_{+-}$, and $A_{-+}$ span $M_2(\mathbb{C})$, the triangle inequality, along with linearity of $\omega_{\mu}^t$ and $\omega_B$, tells us that for every $A \in M_2(\C)$
\begin{align}
    \lim_{\hbar \to 0} \lim_{t \to \infty} |\omega_\mu^t(A) - \omega_B(A)| = 0.
\end{align}
An analogous argument shows that the above result also holds if $\mu$ is a probability measure with support over only negative values of $\delta$. The only significant difference in this argument is that in the analogue of line (\ref{line:convergence of approx localized states}) the left integrand vanishes and the right integrand becomes 
\begin{align}
    \Big| \big \langle \Phi^+, \Phi^+ \big \rangle \Big|^2  \Big| \big \langle \Phi^+, \Psi(0) \big \rangle \Big |^2,
\end{align}
which gives the same result. Likewise, the same change occurs in line (\ref{line:convergence of approx localized states2}). An arbitrary probability measure satisfying the assumptions of the theorem is a convex combination of a probability measure $\mu_+$ over positive values of $\delta$ and a probability measure $\mu_-$ over negative values of $\delta$. Then, from the definition of $\omega_\mu^t$ we can see that it is a convex combination of $\omega_{\mu_+}^t$ and $\omega_{\mu_-}^t$. So, another appeal to the triangle inequality completes the proof.
\end{proof}

Prop. \ref{prop:two state approximation} shows that the two state model of measurement, when combined with a probabilistic treatment of the flea perturbations, can account for the Born rule probabilities. The method of arbitrary functions in this case thus leads to universal limiting probabilistic behavior.  This shows that, at least in this simple case, the Born rule can be understood as an effective result following from the Schr\"odinger equation and fairly mild assumptions about the distribution of perturbations. In the next subsection we show that a similar result holds, under further assumptions, when treating the full double well potential.

\subsection{Double Well Potential}
\label{subsec:doublewell}

We now turn to the full double well potential, i.e. the potential
    $$ V_0(x) = \frac{1}{4} \lambda (x^2-a^2)^2$$
for $a>0$ and $\lambda >0$, which has minima at $x=\pm a$. Recall that we interpret this model as describing the potential of a binary measurement which takes values $x = \pm a$.  The flea perturbations to this potential are interpreted as uncontrolled contributions from the measuring apparatus or environment during the measuring process. In the last subsection, we showed that if we truncate the Hilbert space (and dynamics) to the two lowest energy eigenstates, the Born rule probabilities for measurements of superposition states can be recovered from the Schr\"odinger time evolution under the description of measurement employed by Landsman and Reuvers using the joint $t\to \infty$ and $\hbar\to 0$ limits. In this section, we show that the Born rule probabilities can also be recovered in the full model if some additional assumptions are satisfied.

%One initial complication, is that we do not have closed forms for the ground state and first excited state of the double well potential, and so the same is true for our approximately localized states from Equation \ref{eqn: definition of approximately localized states}, since they are simply symmetric and antisymmetric superpositions of those states. Instead we will replace these states with other approximately localized states, which nevertheless agree in the limit $\hbar \to 0$. As approximately localized states, we use the following Gaussians centered at $\pm a$:
%    \begin{align} \Psi^{\pm}_\hbar = \Big( \frac{ 2 m \lambda a^2}{\pi^2 \hbar^2} \Big)^{1/8} e^{- \sqrt{ 2 m \lambda} a (x \mp a)^2/ 2 \hbar}.
%    \end{align}
%Specifically, it is the Gaussian one gets by approximating the neighborhood of the minimum $x = \pm a$ with a quadratic potential and solving for the ground state of the resulting quantum harmonic oscillator. Importantly, as $\hbar \to 0$, $\Psi^\pm_\hbar$ weakly approaches a distribution with support only at $\pm a$.

The perturbed Hamiltonian corresponding to the above double well potential is
\begin{align}
    H_{\delta,\hbar} = -\frac{\hbar^2}{2m}\frac{\partial^2}{\partial x^2} + \frac{1}{4}\lambda(x^2-a^2)^2 + \delta(x).
    \end{align}
where $\delta$ is a function of position, representing the flea perturbation. The specifics of the flea perturbations $\delta$ will be filled in according to assumptions which require that the ground state and first excited state of the perturbed Hamiltonian converge sufficiently rapidly to particular approximately localized states. Further, we use the notation $\Psi^{(n)}_{\delta,\hbar}$ to denote the $n$th energy eigenstate of the Hamiltonian $H_{\delta,\hbar}$ with perturbation $\delta$ and Planck's constant set to $\hbar>0$.

Consider the ground state $\Psi_{0,\hbar}^{(0)}$ and first excited state $\Psi_{0,\hbar}^{(1)}$ of the unperturbed double well potential. As mentioned previously, their symmetric and antisymmetric superpositions are each then approximately localized in one of the two wells. So, we define
\begin{align}
    \Psi^{\pm}_\hbar = \frac{\Psi_{0,\hbar}^{(0)} \pm \Psi_{0,\hbar}^{(1)}}{\sqrt{2}}
\end{align}
and choose the phases of $\Psi_{0,\hbar}^{(0)}$ and $\Psi_{0,\hbar}^{(1)}$ so that $\Psi^+_\hbar$ is approximately localized in the right well and $\Psi^-_\hbar$ is approximately localized in the left well.\footnote{The sign convention here is different than the one in Eq. (\ref{eqn:two state system localized states}).} For small $\hbar$, each wave function $\Psi^\pm_\hbar$ approximates a Gaussian centered on its respective well.

We now state precisely the previously mentioned assumptions about the space of flea perturbations.  Recall that we aim to treat the flea perturbation $\delta$ as a random variable carrying a probability distribution.  To that end, the assumptions we now present will characterize the space of perturbations in a way that makes the treatment of probability measures over the perturbations tractable. This specific restriction on the space of perturbations is motivated by results of \citet[p. 389-390]{LaRe13} using numerical methods and the WKB approximation \citep[cf. ][]{Ha78,Ha80,JoMaSc81,HeSj84,Si85}. 

Let $\tilde D$ be the set of smooth, real-valued, compactly supported functions on $\mathbb R$.  We will define the set of all flea perturbations of our potential as a subset of $\tilde D$ by introducing a series of restrictions to isolate perturbations of interest. First, we restrict to the two subsets 
\begin{align}
    \tilde{D}^{\pm} =\Big\{\delta\in\tilde{D}\ |\ \lim_{\hbar \to 0} \hbar^{-1} || \Psi^{(0)}_{\delta, \hbar} - \Psi^\pm_\hbar || = 0 \, \text{ and } \lim_{\hbar \to 0} \hbar^{-1} || \Psi^{(1)}_{\delta, \hbar} - \Psi^\mp_\hbar || = 0.\Big\} \label{eqn:convergence of eigenstates to approximately localized states}
\end{align}
where $\Psi^{(0)}_{\delta,\hbar}$ and $\Psi^{(1)}_{\delta,\hbar}$ are the ground state and first excited state, respectively, of the double well potential with a perturbation of $\delta$. 

%The first assumption present in Eq. (\ref{eqn:convergence of eigenstates to approximately localized states}) is that it restricts our space of fleas to only those for which the resulting ground state and first excited state, in the $\hbar \to 0$ limit, converge sufficiently rapidly to the approximately localized states described above.  Recall that \citet[p. 389-390]{LaRe13} provide evidence that this localization occurs for the perturbed energy eigenstates through both numerical methods and the WKB approximation.  Rigorous mathematical results concerning the asymptotic behavior of the associated eigenvalues also appear in the literature \citep{Ha78,Ha80,JoMaSc81,JoMaSc81a,CoDuSe83,GrGrJo84,HeSj84}.  Our assumption further assumes that the eigenvectors approach one another in norm, and we assume that the convergence takes place faster than $\hbar$ as $\hbar\to 0$.  In what follows, then, we take this to be a structural assumption about the allowed perturbations, which is motivated by known analytic results.

The subsets $\tilde{D}^\pm$ restrict to fleas for which the resulting ground state and first excited state converge sufficiently rapidly to the approximately localized states in the $\hbar \to 0$ limit.  As mentioned before, \citet[p. 389-390]{LaRe13} provide evidence that this localization occurs for the perturbed energy eigenstates through both numerical methods and the WKB approximation. 
 In what follows, we make this structural assumption about the allowed perturbations, motivated by known mathematical results.

The difference between $\tilde{D}^+$ and $\tilde{D}^-$ lies in which of the two approximately localized states each of the perturbed energy eigenstates converges to.  In $\tilde{D}^+$, we have $\Psi^{(0)}_{\delta,\hbar} \to \Psi^+_\hbar$ and $\Psi^{(1)}_{\delta,\hbar} \to \Psi^-_\hbar$, while in $\tilde{D}^-$, we have $\Psi^{(0)}_{\delta,\hbar} \to \Psi^-_\hbar$ and $\Psi^{(1)}_{\delta,\hbar} \to \Psi^+_\hbar$.  We will treat the general situation by considering both perturbations in $\tilde{D}^+$ and $\tilde{D}^-$.

Recall that to apply the method of arbitrary functions, we need to make some ``smoothness" assumptions about the probability measures we consider over the possible flea perturbations.  In particular, we need to understand the space of possible perturbations as carrying a background measure (analogous to the Lebesgue measure) so that we can restrict to probability measures that possess a density.  To that end, we will provide a numerical representation of our space of perturbations in terms of their corresponding sequences of energy eigenvalues.  This will allow us to treat distributions over (differences of) those energy eigenvalues.

Notice that our measure over the space of perturbations gives rise to a stochastic process given by the random variables $(E^{(n)}_{\delta,\hbar})_{n\in \mathbb{N}}$, each denoting the $n$th energy eigenvalue of the double well Hamiltonian $H_{\delta,\hbar}$ for perturbation $\delta$, i.e.,
\begin{align}
H_{\delta,\hbar}\Psi_{\delta,\hbar}^{(n)} = E^{(n)}_{\delta,\hbar}\Psi_{\delta,\hbar}.
\end{align}
The index set of this stochastic process thus corresponds to the sequence of energy levels $n= 0,1,2,...$ of the system.  In what follows, we will assume that all quantities of interest (e.g., inner products of eigenfunctions) can be thought of as functions that depend only on these energy eigenvalues, or in other words that the space of allowed perturbations can be identified with a space of sequences of corresponding energy eigenvalues.  

To state this more precisely, notice that for each perturbation $\delta$ in $\tilde D$ and $\hbar \in (0,\infty)$, there is a corresponding sequence of real numbers $(E^{(n)}_{\delta,\hbar})_{n\in\N}$ in the space $\R^\N$ of countable sequences of real numbers . Let $\epsilon_\hbar$ be the map $\tilde D \to \mathbb R^{\mathbb N}$ which sends a perturbation $\delta$ to that corresponding sequence of energy eigenvalues in $\R^\N$. We now restrict our space of perturbations $\tilde D^+$ further by requiring that $\epsilon_\hbar$ is injective for all positive $\hbar$ less than some value $\hbar_0$. Label the resulting subspace $D^+$. For every $\hbar < \hbar_0$, we can then use $\epsilon_\hbar$ to identify $D^+$ with some subset $\epsilon_\hbar(D^+) \subset \mathbb{R}^{\mathbb N}$. The space $\epsilon_\hbar(D^+)$ thus consists of sequences of energy eigenvalues for the corresponding perturbed Hamiltonians $H_{\delta,\hbar}$.
The consequence of the assumption that the map $\epsilon_\hbar$ is injective is that every probability measure on $D^+$ can be understood as a measure on some subset of $\R^\N$ and every random variable dependent on $\delta$ can be thought of as a random variable on this same subset of $\R^\N$. Likewise, denote by $D^-$ a corresponding subset of $\tilde{D}^-$ for which the map $\epsilon_\hbar$ is injective.  Finally, note that the set $\R^\N$ carries the countable product of Lebesgue measures on $\R$ as a reference measure, denoted by $\nu$.

The restriction from $\tilde{D}$ to $D^\pm$ can be motivated by the two-state approximation. The original space of perturbations is $\tilde D = \{ \begin{psmallmatrix} \delta & 0 \\ 0 & 0 \end{psmallmatrix} : \delta \in \mathbb R \}$. In this setting, the set of $\delta$ which have perturbed ground state preferring the right well is the set of all $\delta < 0$, so $\tilde D^+$ is the corresponding subset of $\tilde D$. Recall that restricting to $\delta<0$ was an important simplification in the proof of Proposition \ref{prop:two state approximation}. Since $\delta = - \sqrt{\Delta_{\delta,\hbar}^2 - \Delta_\hbar^2}$, we see that the map $\delta \mapsto (E^{(0)}_{\delta, \hbar}, E^{(1)}_{\delta, \hbar})$ is already injective without any further restriction, so $D^+ = \tilde D^+$.  Likewise, $D^-$ corresponds to all values $\delta>0$.  In the two-state approximation, we showed that probability measures over positive $\delta$ and probability measures over negative $\delta$ both lead to states which approach the Born rule state in the combined $t \to \infty$ and $\hbar \to 0$ limits. This meant that convex combinations of such measures would also have this property. A similar strategy yields the same result in the more general case of the double well. %The test functions $g$ which appear in the following theorem can be split into linear combinations of test functions over distinct spaces satisfying the assumptions of $D^+$ and the analogous $D^-$.

\begin{thm} \label{thm:double well}
Suppose $\mu$ is a measure on $D^+\cup D^-$ such that for every $\hbar>0$, $\epsilon_\hbar^*\mu$ is absolutely continuous with respect to $\nu$ on $\R^\N$.  Given a state $\Psi_\hbar(t=0) = \alpha \Psi^+_\hbar + \beta \Psi^-_\hbar$, the corresponding time evolved Wigner distributions satisfy
    \begin{align}
        \lim_{\hbar \to 0}  \lim_{t \to \infty}  W^{\Psi_{\delta, \hbar}(t)}_\hbar = |\alpha|^2\cdot  \delta_{(+a,0)} + |\beta|^2\cdot \delta_{(-a,0)}
        \end{align}
with the limits in the weak* topology. Specifically, for every $f \in C_c^\infty(\mathbb R^2)$,
    \begin{align}
    \label{eq:deltalimit}
    \lim_{\hbar \to 0}  \lim_{t \to \infty} \int d \mu(\delta)dx dp \,  f(x,p) \Big( W^{\Psi_{\delta, \hbar}(t)}_\hbar(x,p) - \big( |\alpha|^2 \cdot \delta_{(+a,0)} + |\beta|^2 \cdot\delta_{(-a,0)} \big) \Big) = 0
    \end{align}
\end{thm}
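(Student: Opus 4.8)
The plan is to combine the Weyl--Wigner duality $\int f(x,p)\,W_\hbar^\Psi(x,p)\,dx\,dp=\inner{\Psi}{\mathcal{Q}_\hbar(f)\Psi}$ with an eigenfunction expansion, in the same spirit as the proof of Proposition \ref{prop:two state approximation}, but with the two-dimensional energy eigenspace replaced by the full eigenbasis of $H_{\delta,\hbar}$. Fix $f\in C_c^\infty(\R^2)$, and for each $\delta$ and $\hbar$ expand $\Psi_\hbar(0)=\alpha\Psi^+_\hbar+\beta\Psi^-_\hbar=\sum_n c_n(\delta,\hbar)\,\Psi^{(n)}_{\delta,\hbar}$ in the orthonormal eigenbasis of $H_{\delta,\hbar}$, with $c_n(\delta,\hbar)=\inner{\Psi^{(n)}_{\delta,\hbar}}{\Psi_\hbar(0)}$ and $\sum_n|c_n|^2=1$. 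Since $\Psi_{\delta,\hbar}(t)=\sum_n c_n\,e^{-iE^{(n)}_{\delta,\hbar}t/\hbar}\,\Psi^{(n)}_{\delta,\hbar}$, the duality gives
\begin{multline*}
\int d\mu(\delta)\,dx\,dp\; f\, W_\hbar^{\Psi_{\delta,\hbar}(t)}\\
=\sum_{n,m}\int d\mu(\delta)\;\overline{c_n}\,c_m\,\inner{\Psi^{(n)}_{\delta,\hbar}}{\mathcal{Q}_\hbar(f)\Psi^{(m)}_{\delta,\hbar}}\;e^{i(E^{(n)}_{\delta,\hbar}-E^{(m)}_{\delta,\hbar})t/\hbar}.
\end{multline*}
The idea is that as $t\to\infty$ the off-diagonal ($n\neq m$) terms average away and the $t$-independent diagonal survives, and then as $\hbar\to 0$ the tail $n\geq 2$ becomes negligible while the two leading terms reproduce the Born weights $|\alpha|^2,|\beta|^2$ and the point masses at $(\pm a,0)$.

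\emph{Step 1 ($t\to\infty$).} For a fixed pair $n\neq m$, I would use the hypothesis that $\epsilon_\hbar^*\mu$ is absolutely continuous with respect to $\nu$: since $\nu$ is a product of Lebesgue measures, Fubini makes the joint law of $(E^{(m)}_{\delta,\hbar},E^{(n)}_{\delta,\hbar})$ under $\mu$ absolutely continuous with respect to planar Lebesgue measure, hence the law of the gap $E^{(n)}_{\delta,\hbar}-E^{(m)}_{\delta,\hbar}$ absolutely continuous with respect to Lebesgue measure on $\R$; using injectivity of $\epsilon_\hbar$ to regard the bounded coefficient $\overline{c_n}c_m\inner{\Psi^{(n)}_{\delta,\hbar}}{\mathcal{Q}_\hbar(f)\Psi^{(m)}_{\delta,\hbar}}$ as a function on $\R^\N$, the $(n,m)$ integral reduces to the Fourier transform of an $L^1$ function evaluated along $t/\hbar\to\infty$, so it tends to $0$ as $t\to\infty$ by Theorem \ref{thm:inftime} (equivalently, the Riemann--Lebesgue lemma). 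To move the $t\to\infty$ limit through the infinite sum I would use a uniform-in-$t$ tail bound: with $P_{>N}$ the spectral projection of $H_{\delta,\hbar}$ onto levels above $N$, which commutes with the propagator, all terms with $\max(n,m)>N$ contribute at most a constant multiple of $\norm{\mathcal{Q}_\hbar(f)}_\hbar\norm{P_{>N}\Psi_\hbar(0)}=\norm{\mathcal{Q}_\hbar(f)}_\hbar\big(\sum_{n>N}|c_n(\delta,\hbar)|^2\big)^{1/2}$ in absolute value, which tends to $0$ as $N\to\infty$ after integrating in $\mu$ (dominated convergence, bound $1$). Hence the $t\to\infty$ limit exists and equals $\sum_n\int|c_n(\delta,\hbar)|^2\inner{\Psi^{(n)}_{\delta,\hbar}}{\mathcal{Q}_\hbar(f)\Psi^{(n)}_{\delta,\hbar}}\,d\mu(\delta)$.

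\emph{Step 2 ($\hbar\to 0$).} Split this sum into $n\in\{0,1\}$ and $n\geq 2$. Since $\hbar\mapsto\norm{\mathcal{Q}_\hbar(f)}_\hbar$ is continuous, hence bounded on $[0,1]$ (as already used in \S\ref{sec:arbfun}), the tail is bounded by $\sup_{\hbar\in[0,1]}\norm{\mathcal{Q}_\hbar(f)}_\hbar\int\big(1-|c_0(\delta,\hbar)|^2-|c_1(\delta,\hbar)|^2\big)\,d\mu(\delta)$; and the defining property \eqref{eqn:convergence of eigenstates to approximately localized states} of $D^\pm$ forces $\Psi^{(0)}_{\delta,\hbar},\Psi^{(1)}_{\delta,\hbar}$ to converge in norm to $\Psi^+_\hbar,\Psi^-_\hbar$ (in one order on $D^+$, the opposite on $D^-$), so $|c_0|^2+|c_1|^2\to1$ pointwise on $D^+\cup D^-$ and the tail vanishes by dominated convergence. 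For the two leading terms, $\abs{\inner{\Psi^{(k)}_{\delta,\hbar}}{\mathcal{Q}_\hbar(f)\Psi^{(k)}_{\delta,\hbar}}-\inner{\Psi^{\pm}_\hbar}{\mathcal{Q}_\hbar(f)\Psi^{\pm}_\hbar}}\leq 2\norm{\mathcal{Q}_\hbar(f)}_\hbar\norm{\Psi^{(k)}_{\delta,\hbar}-\Psi^{\pm}_\hbar}\to 0$, while $\inner{\Psi^{\pm}_\hbar}{\mathcal{Q}_\hbar(f)\Psi^{\pm}_\hbar}=\int f\,W^{\Psi^\pm_\hbar}_\hbar\,dx\,dp\to f(\pm a,0)$ because the classical limit of $\Psi^\pm_\hbar$ is $\delta_{(\pm a,0)}$; likewise $c_0\to\alpha,\ c_1\to\beta$ on $D^+$ but $c_0\to\beta,\ c_1\to\alpha$ on $D^-$, the swap of eigenvector labels exactly compensating the swap of coefficients. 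So on all of $D^+\cup D^-$ the integrand converges pointwise to $|\alpha|^2 f(+a,0)+|\beta|^2 f(-a,0)$, and one last dominated convergence (constant bound $\sup_{\hbar\in[0,1]}\norm{\mathcal{Q}_\hbar(f)}_\hbar$) gives $|\alpha|^2 f(+a,0)+|\beta|^2 f(-a,0)$ since $\mu$ is a probability measure. As the point masses in \eqref{eq:deltalimit} do not depend on $\delta$, this equals $\int d\mu(\delta)\,dx\,dp\;f\,\big(|\alpha|^2\delta_{(+a,0)}+|\beta|^2\delta_{(-a,0)}\big)$, which is \eqref{eq:deltalimit}. (As in Proposition \ref{prop:two state approximation}, one may instead treat $\mu$ restricted to $D^+$ and to $D^-$ separately and then add.)

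I expect Step 1 to be the main obstacle: unlike in the finite-dimensional Proposition \ref{prop:two state approximation}, it requires both transferring absolute continuity from $\epsilon_\hbar^*\mu$ on $\R^\N$ down to the law of every energy gap $E^{(n)}_{\delta,\hbar}-E^{(m)}_{\delta,\hbar}$, so that Theorem \ref{thm:inftime} applies term by term, and a uniform-in-$t$ estimate strong enough to interchange the $t\to\infty$ limit with the sum over all eigenstates. A secondary subtlety is the ordering of the iterated limits: the $t$-limit must be taken inside the integral over $\delta$ (a single flea yields only almost-periodic, nonconvergent behaviour in $t$), and only the resulting $t$-independent quantity is then sent to its $\hbar\to0$ limit.
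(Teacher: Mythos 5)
Your proposal is correct and follows essentially the same route as the paper's proof: expand in the eigenbasis of $H_{\delta,\hbar}$, kill the off-diagonal terms as $t\to\infty$ by pushing the absolute continuity of $\epsilon_\hbar^*\mu$ down to the law of each energy gap and invoking Theorem~\ref{thm:inftime}, then as $\hbar\to 0$ use the defining convergence property of $D^\pm$ to identify the $k=0,1$ diagonal terms with $|\alpha|^2\delta_{(+a,0)}+|\beta|^2\delta_{(-a,0)}$ and to discard the $k\geq 2$ tail. The only differences are cosmetic: you phrase the estimates through $\norm{\mathcal{Q}_\hbar(f)}_\hbar$ and spectral-projection tail bounds where the paper works with pointwise bounds on the Wigner functions via $\norm{\Omega_\hbar^W(x,p)}=2$ and the $\hbar^{-1}$ convergence rate built into the definition of $D^\pm$.
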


\begin{proof}
It suffices to deal first with the states $\Psi^+_\hbar$ and $\Psi^-_\hbar$, whose Wigner functions are known to converge in the weak* topology to $\delta_{(+a,0)}$ and $\delta_{(-a,0)}$, respectively.  We will show that the following difference vanishes in the limit:
\begin{align}
\label{eq:pmlimit}
    \lim_{\hbar \to 0}  \lim_{t \to \infty} \int d \mu(\delta)dx dp \,  f(x,p) \Big( W^{\Psi_{\delta, \hbar}(t)}_\hbar(x,p) - \big( |\alpha|^2 \cdot W^{\Psi^+}_\hbar + |\beta|^2 \cdot W^{\Psi^-}_\hbar \big) \Big) = 0.
\end{align}

The time evolution of an initial state $\Psi_\hbar(0)$ under the Hamiltonian $H_{\delta,\hbar}$ is given by
\begin{align}
    \Psi_{\delta,\hbar}(t) = \sum_{k=0}^\infty  c_{\delta,\hbar}^{(k)} \Psi^{(k)}_{\delta,\hbar} e^{-i E^{(k)}_{\delta,\hbar} t / \hbar }
\end{align}
where $c_{\delta,\hbar}^{(k)} = \langle \Psi_{\delta, \hbar}^{(k)}, \Psi_\hbar(0) \rangle $. So, then the corresponding Wigner function is
\begin{align}
    W^{\Psi_{\delta,\hbar}(t)}_\hbar(x,p) 
    &= \hbar^{-1} \big \langle \Psi_{\delta,\hbar}(t), \Omega_\hbar^W(x,p) \Psi_{\delta,\hbar}(t) \big \rangle \\
    &= \hbar^{-1} \sum_{j,k=0}^\infty  \overline{ c_{\delta,\hbar}^{(j)}} c_{\delta,\hbar}^{(k)} e^{i (E^{(j)}_{\delta,\hbar} - E^{(k)}_{\delta,\hbar}) t / \hbar }  \big \langle  \Psi^{(j)}_{\delta,\hbar} ,  \Omega_\hbar^W(x,p)    \Psi^{(k)}_{\delta,\hbar}   \big \rangle 
\end{align}
where $\langle \cdot, \cdot \rangle$ is the inner product of $L^2(\mathbb R)$ and $\Omega_\hbar^W(x,p)$ is the linear operator defined by Eq. (\ref{eq:Omega}).

It follows that
\begin{align}
    \lim_{t \to \infty} &\int  d\mu(\delta) dx dp f(x,p)   W_\hbar^{\Psi_{\delta,\hbar}(t)}(x,p) \label{eqn:expectation value of a Wigner distribution} \\
    &= \label{eq:firstline}\hbar^{-1} \sum_{j=0}^\infty \sum_{k=0}^\infty \lim_{t \to \infty}  \int  d\mu(\delta) dx dp f(x,p)   \bigg(  \overline{ c_{\delta,\hbar}^{(j)}} c_{\delta,\hbar}^{(k)} e^{i (E^{(j)}_{\delta,\hbar} - E^{(k)}_{\delta,\hbar})t/\hbar}  \big \langle  \Psi^{(j)}_{\delta,\hbar} ,  \Omega_\hbar^W(x,p)  \Psi^{(k)}_{\delta,\hbar} \big \rangle \bigg) \\ 
    &= \label{eq:secondline} \sum_{k=0}^\infty  \int  d\mu(\delta) dx dp f(x,p)   | c_{\delta,\hbar}^{(k)} |^2 W_\hbar^{\Psi^{(k)}_{\delta,\hbar}} (x,p)  \\
    & \hspace{.75in} +\ \hbar^{-1} \sum_{k=0}^\infty \sum_{j \neq k} \lim_{t \to \infty}  \int  d\mu(\delta) dx dp f(x,p)   \bigg(  \overline{ c_{\delta,\hbar}^{(j)}} c_{\delta,\hbar}^{(k)} e^{i (E^{(j)}_{\delta,\hbar} - E^{(k)}_{\delta,\hbar})t/\hbar}  \big \langle  \Psi^{(j)}_{\delta,\hbar} ,  \Omega_\hbar^W(x,p)  \Psi^{(k)}_{\delta,\hbar} \big \rangle \bigg) \\ \label{eq:thirdline}
    &=  \sum_{k=0}^\infty  \int  d\mu(\delta) dx dp f(x,p)   | c_{\delta,\hbar}^{(k)} |^2 W_\hbar^{\Psi^{(k)}_{\delta,\hbar}} (x,p) \\ 
    &= \label{eq:fourthline} \int  d\mu(\delta) dx dp f(x,p)   \Big(  \sum_{k=0}^\infty | c_{\delta,\hbar}^{(k)} |^2 W_\hbar^{\Psi^{(k)}_{\delta,\hbar}} (x,p) \Big) 
\end{align}
In line (\ref{eq:firstline}) we have appealed to the Fubini-Tonelli theorem and the dominated convergence theorem since $||\Omega_\hbar^W(x,p)|| = 2$ gives $2|f(x,p)|$ as an upper bound on the absolute value of the integrand and since this has a finite integral the Fubini-Tonelli theorem says that we can interchange the sums and integral. In line (\ref{eq:thirdline}) we have used the fact that the $j \neq k$ integrals vanish in the limit $t \to \infty$. This follows from Thm. \ref{thm:inftime} since the factor $e^{i (E^{(j)}_{\delta,\hbar} - E^{(k)}_{\delta,\hbar})t/\hbar}$ is uniformly distributed over the unit circle in this limit. This is shown in detail in Appendix \ref{app:calculations}. Line (\ref{eq:fourthline}) follows from the bound $\big| W_\hbar^{\Psi^{(j)}_{\delta,\hbar}} (x,p) \big| \leq 2 \hbar^{-1}$ and another application of the Fubini-Tonelli theorem.

So we have 
\begin{align}
    &\lim_{\hbar \to 0} \lim_{t \to \infty} \int d\mu(\delta) dx dp f(x,p) \Big( W_\hbar^{\Psi_{\delta,\hbar}(t)}(x,p) - \big(|\alpha|^2 W_\hbar^{\Psi_\hbar^+}(x,p) + |\beta|^2 W_\hbar^{\Psi_\hbar^-}(x,p) \big) \Big) \\
    & \hspace{.5in} = \lim_{\hbar \to 0} \int d\mu(\delta) dx dp f(x,p) \Big( \sum_{k=0}^\infty  \big| c_{\delta,\hbar}^{(k)} \big|^2 W^{\Psi^{(k)}_{\delta,\hbar}}_\hbar(x,p)
    - \big(|\alpha|^2 W_\hbar^{\Psi_\hbar^+}(x,p) + |\beta|^2 W_\hbar^{\Psi_\hbar^-}(x,p) \big) \Big) \\
    & \hspace{.5in} = \lim_{\hbar \to 0} \int d\mu(\delta) dx dp f(x,p) \Big( \big| c_{\delta,\hbar}^{(0)} \big|^2 W^{\Psi^{(0)}_{\delta,\hbar}}_\hbar(x,p)
    - |\alpha|^2 W_\hbar^{\Psi_\hbar^+}(x,p) \Big) \label{eq:firstint} \\
    & \hspace{1in} + \lim_{\hbar \to 0} \int d\mu(\delta) dx dp f(x,p) \Big( \big| c_{\delta,\hbar}^{(1)} \big|^2 W^{\Psi^{(1)}_{\delta,\hbar}}_\hbar(x,p)
    -  |\beta|^2 W_\hbar^{\Psi_\hbar^-}(x,p)  \Big) \label{eq:secondint} \\
    & \hspace{1in} + \lim_{\hbar \to 0} \int d\mu(\delta) dx dp f(x,p) \Big( \sum_{k=2}^\infty  \big| c_{\delta,\hbar}^{(k)} \big|^2 W^{\Psi^{(k)}_{\delta,\hbar}}_\hbar(x,p) \Big) \label{eq:thirdint} \\
    & \hspace{.5in} =0
\end{align}

\noindent
Line (\ref{eq:firstint}) vanishes because
    $$\Big| \big| c_{\delta,\hbar}^{(0)} \big|^2 - |\alpha|^2 \Big| \leq 2||\Psi^{(0)}_{\delta,\hbar} - \Psi^+_\hbar ||, $$ 
    $$\Big| W_\hbar^{\Psi_\hbar^+}(x,p) \Big| \leq 4 \hbar^{-1}, $$ 
    $$\Big| W^{\Psi^{(0)}_{\delta,\hbar}}_\hbar(x,p) - W_\hbar^{\Psi_\hbar^+}(x,p) \Big| \leq 4 \hbar^{-1}  ||\Psi^{(0)}_{\delta,\hbar} - \Psi^+_\hbar ||. \text{}$$
Line (\ref{eq:secondint}) vanishes because analogous inequalities hold for $c_{\delta,\hbar}^{(1)}$, $\beta$, $\Psi^{(1)}_{\delta,\hbar}$, and $\Psi^-_\hbar$.
    %$$\Big| \big| c_{\delta,\hbar}^{(1)} \big|^2 - |\beta|^2 \Big| \leq 2||\Psi^{(1)}_{\delta,\hbar} - \Psi^-_\hbar ||, $$ 
    %$$\Big| W_\hbar^{\Psi_\hbar^-}(x,p) \Big| \leq 4 \hbar^{-1}, $$ 
    %$$\Big| W^{\Psi^{(1)}_{\delta,\hbar}}_\hbar(x,p) - W_\hbar^{\Psi_\hbar^-}(x,p) \Big| \leq 4 \hbar^{-1}  ||\Psi^{(1)}_{\delta,\hbar} - \Psi^-_\hbar ||. \text{}$$
For line (\ref{eq:thirdint}),
\begin{align}
    \Big| \int d\mu(\delta)dxdp f(x,p) W^{\Psi^{(k)}_{\delta,\hbar}}_\hbar(x,p) \Big| \leq \sup_{\hbar\in[0,1]}\norm{\mathcal{Q}_\hbar(f)}_\hbar
\end{align}
since $\mathcal{Q}_\hbar$ is a strict quantization \cite{Landsman1990a}. So, the dominated convergence theorem yields that 
\begin{align}
    \lim_{\hbar \to 0} \int &d\mu(\delta) \sum_{k=2}^\infty \big| c_{\delta,\hbar}^{(k)} \big|^2 \int dx dp f(x,p)  W^{\Psi^{(k)}_{\delta,\hbar}}_\hbar(x,p) \\
    &= \int d\mu(\delta) \sum_{k=2}^\infty \lim_{\hbar \to 0} \Big( \big| c_{\delta,\hbar}^{(k)} \big|^2  \int dx dp f(x,p)  W^{\Psi^{(k)}_{\delta,\hbar}}_\hbar(x,p) \Big) \\
    &= 0
\end{align}
since $\lim_{\hbar \to 0} \big| c_{\delta,\hbar}^{(k)} \big|^2 = 0$ for $k \geq 2$ which follows from the assumption of Eq. (\ref{eqn:convergence of eigenstates to approximately localized states}). This is because the perturbed energy eigenstates for $k \geq 2$ are orthogonal to the perturbed energy eigenstates for $k=0,1$ which approach the approximately localized states that the initial state is a superposition of.

Hence, we have established Eq. (\ref{eq:pmlimit}).  Now, Eq. (\ref{eq:deltalimit}) follows from the triangle inequality and the fact that
\begin{align}
    \lim_{\hbar\to 0} \int dx dp \, f(x,p) \Big(W^{\Psi^\pm}_\hbar - \delta_{(\pm a, 0)}\Big) = 0.
\end{align}
\end{proof}
        
This theorem states that the $t \to \infty$ and $\hbar \to 0$ limits of the time evolved Wigner distribution $W^{\Psi_{\delta, \hbar}(t)}_\hbar$ lead to a probabilistic mixture of distributions totally localized at $x = \pm a$. Moreover, the coefficients $|\alpha|^2$ and $|\beta|^2$ are exactly the ones we would expect from the Born rule since for small enough $\hbar$, the component $\Psi^+$ represents a state arbitrarily well localized at $x=+a$, and likewise the component $\Psi^-$ represents a state arbitrarily well localized at $x=-a$. Thus, according to the Born rule, measurement of the state $\alpha \Psi^+_\hbar + \beta \Psi^-_\hbar$ should result in a value of $x = +a$ with probability $|\alpha|^2$ and a value of $x = -a$ with probability $|\beta|^2$, as we have found. So, Theorem \ref{thm:double well} says that the model Landsman and Reuvers provide for a binary measurement in the perturbed double well can also capture the Born rule probabilities in the $t\to\infty$ and $\hbar\to 0$ limits.

\section{Conclusion and Future Directions}
\label{sec:con}

In this paper, we have shown that one can reproduce the Born rule by applying the method of arbitrary functions to a toy quantum mechanical model of a measurement setup.  Specifically, we showed that the probabilities predicted by the dynamical evolution of the quantum state under the Schr\"{o}dinger equation for the perturbed double well Hamiltonian come arbitrarily close to the Born rule probabilities for large enough times and small enough values of Planck's constant.  This statement is captured by our results in Prop. \ref{prop:two state approximation} and Thm. \ref{thm:double well} about the successive $t\to\infty$ and $\hbar\to 0$ limits.  We now make a number of remarks about the interpretation and significance of our results.

%1. Probabilities in, probabilities out.

The explanatory schema that we have proposed does not attempt to derive probabilities from non-probabilistic facts.  In other words, we have not attempted to give an explanation of why probabilities arise within physics in the first place.  Rather, as \citet[][p. 118]{My21} emphasizes, the method of arbitrary functions always begins by assuming the existence of certain probabilities.  We assume that there is an initial probability distribution over some of the parameters---namely, governing the flea perturbation---in the dynamics of the physical system.  From this, we can derive the Born rule probabilities for the outcomes of a measurement in the double well toy model as universal probabilities arising in the $t\to\infty $ and $\hbar\to 0$ limits.  So ultimately, we have derived some probabilities from others.  What is interesting is that in this case the initial distribution over the dynamical parameters can be chosen arbitrarily, and as long as it comes from within our assumed class of distributions, it will lead to the Born rule probabilities in those limits.

%2. Initial distributions could be subjective or objective.

Of course, for this explanatory schema to generate an actual explanation, one should rightly ask for an interpretation of those initial probability distributions over the dynamical parameters governing the system.  What is the physical significance of the probability distributions over the flea perturbation?  One option is that they signify a kind of objective or physical probability.  In this case, such a physical probability might be associated with the mechanism by which the flea perturbation is produced as an effective potential from the interaction of some other physical system with the measurement apparatus or double well system.  Another option is that the distributions represent the reasonable beliefs of a rational agent modeling the system who does not have precise knowledge of the form of the perturbation.  Analogous interpretive options have been characterized and discussed in the literature on the classical method of arbitrary functions \citep{Ab10,Ro11a,Ho16,My21,dC22}; that same discussion is warranted when applying the method of arbitrary functions to quantum systems, but we leave it to future interpretive work.  The exact status of the explanation generated by the method of arbitrary functions---both the explanandum and the explanans---may differ depending on what the initial probability distributions are taken to represent.

%3. Limiting behavior captures an approximation.  Not just limiting behavior of distributions.  Explicit error bounds for finite $t$ and $\hbar$.

We take the universal limiting behavior as $t\to\infty$ and $\hbar\to 0$ to provide an approximation by showing that the probabilities for measurement outcomes in the toy model of the double well come arbitrarily close to the Born rule probabilities for large enough times and small enough values of Planck's constant.  But one can do better with the method of arbitrary functions.  Proofs of convergence in limits like those analyzed in this paper involve calculating explicit error bounds on how far the probabilities can deviate from the universal behavior \citep[][p. 108]{My21}.  If one desires those bounds, one can work backwards from mathematical results in the method of arbitrary functions \citep[][\S3]{En92} concerning bounds for random variables whose density has bounded total variation.  Convergence in the appropriate limits encodes the fact that such error bounds must exist.

%4. Explanation schema, could be filled in by details of interpretations of quantum mechanics.

The explanation schema we have offered here is intended to be neutral among different interpretations of quantum mechanics.\footnote{Compare this with the claim by \citet{St22} that the Deutsch-Wallace derivation of the Born rule is interpretation neutral and thus applies beyond Everettian many worlds theories.}  Since our results rely only on the standard quantum dynamics, they may be employed in any no-collapse interpretation that involves unitary Schr\"{o}dinger evolution, including Bohm-type pilot wave theories and Everettian many worlds theories.  Our results might also be understood to bear on spontaneous collapse 
 dynamics \citep{GhRiWe86,BaGh03}, insofar as those dynamics approximately mimic Schr\"{o}dinger evolution for small enough values of $\hbar$.  As such, we would expect that proponents of any of those interpretations would fill in details of the explanation corresponding to their preferred physical understanding of the measurement process.  Here, our only aim was to set out the explanatory schema, so we leave it as an open question whether any of these more specific interpretations can fill in details compatible with their solution to the measurement problem.  We believe it is worth further investigation to see whether our explanatory schema is compatible with those interpretations.  In this paper though, we refrain from taking on any particular solution to the measurement problem.

We also allow that there may be more than one way to provide an explanation of the Born rule probabilities.  Our results form the backbone of one type of explanation.  Typicality arguments \citep{DuGoZa92} and decision-theoretic approaches \citep{Wa11} might provide others.  We take no position here on whether these should be seen as rivals; we leave open the possibility that alternative explanations might peacefully coexist.

%5. We do not require that flea solves the measurement problem.  What we take from Landsman and Reuvers is that the flea perturbation leads to a quantum dynamics that evolves initial quantum superpositions arbitrarily close to outcome states.  Interpretations could fill in more details about how measurement outcomes appear (or defend the view that this is enough).

Likewise, in this paper we remain agnostic about the view of \citet{LaRe13} that the flea model provides a novel solution to the measurement problem.  Our claim is that the Born rule can be seen to follow from structural features of the Schr\"{o}dinger dynamics in these measurement-type scenarios.  We have borrowed the dynamical model used by Landsman and Reuvers to illustrate these structural features, but our results are independent of how one interprets the flea dynamics to bear on the appearance of determinate measurement outcomes.

What we take from the discussion of the double well by \citet{LaRe13} is that the flea perturbation leads to a quantum dynamics that evolves initial quantum superpositions so that they become arbitrarily close to states that are approximately localized at the minima of the wells for large enough times and small enough values of Planck's constant.  Those results use only the standard unitary evolution of quantum theory, so any interpretation of quantum mechanics employing the unitary dynamics has access to those facts.  It is a further step to the view that these results suffice to explain the appearance of classical measurement outcomes \citep{LaRe13,La17}.  For the significance of our results, one might subscribe to Landman's view, or one might instead attempt to use a different interpretation of quantum mechanics to fill in details concerning how measurement outcomes appear in this model.  We leave open these possible interpretations and only claim to have derived results that follow from the standard quantum dynamics.

%6. Contents of explanation schema: First, $t\to \infty$ limit makes interference terms vanish.  Then, $\hbar\to 0$ limit sends energy eigenstates to outcome states.

While we have worked in the context of the particular toy model of measurement provided by the perturbed double well potential, the source of our explanatory schema relies only on certain structural features of the dynamics that other models might share.  In our explanatory schema, it is the $t\to\infty$ limit that makes interference terms vanish in the unitary evolution of a superposition of energy eigenstates, thus yielding a mixture of those energy eigenstates.  Then it is the $\hbar\to 0$ limit that brings the resulting energy eigenstates to approximate definite outcome states.  After these two steps, we see that the components of the resulting mixture are the outcome states carrying the Born rule probabilities as weights in a convex sum.

%7. It is a general feature of quantum mechanics that the time evolution of superpositions of energy eigenstates will be periodic.  Since the method of arbitrary functions applies whenever the periods of the evolution (energy differences) are random variables, universal limiting behavior may arise in other quantum systems, too.

In the first step of our explanatory schema governed by the $t\to \infty$ limit, we rely only on quite general features of quantum mechanics to suppress the interference terms.  Specifically, the double well example illustrates how cross-terms in relevant inner products or integrals of Wigner functions will involve periodic functions whose frequency is proportional to an energy gap.  The method of arbitrary functions will apply whenever the frequency of a time evolution is a random variable.  So whenever it is appropriate to treat the energy gaps of the system as random variables---whose probabilistic nature arises from some uncertainty about the exact form of the Hamiltonian---we conjecture that universal limiting behavior will arise.  Thus, it is likely that the method of arbitrary functions has wider application to other quantum systems.

%8. It is a special feature of the model that $\hbar\to 0$ limit sends energy eigenstates to outcome states.  Good numerical evidence, but depends on form of the perturbation.  Our assumptions essentially require that the perturbation has a form leading to this structure.  This can also be thought of as a restriction on the allowed initial distributions over the perturbations.  It is worth further discussion whether this is a good assumption on initial distributions; this may be tied to how realistic the model is and how the initial distribution is interpreted.

On the other hand, the second step of our explanatory schema relies on special features of the double well model, and we do not know if those features are present in other models.  It is a special feature of the double well model that in the $\hbar\to 0$ limit, the first two perturbed energy eigenstates come arbitrarily close to the localized outcome states for small enough values of Planck's constant.  Our assumptions require that the difference of these vectors converges to zero in norm sufficiently rapidly as $\hbar\to 0$.  There is good numerical and theoretical evidence for convergence \citep{LaRe13}, but we also make strong assumptions about the rate of convergence, which are required for our results.  Our assumptions can be thought of as encoding constraints on the possible forms of the perturbation to the potential, restricting to only those that have this special property as $\hbar\to 0$.  Alternatively, this can be thought of as a restriction on the initial distributions over possible perturbations, limiting to only those distributions whose support lies on the space of perturbations with the relevant convergence properties.  Either way, the assumption that the perturbed energy eigenstates come arbitrarily close to outcome states in the $\hbar\to 0$ limit  deserves further discussion.  In particular, further investigation is necessary to determine if this is an appropriate constraint on perturbations in physical models of measurement, or if this is a reasonable assumption about the initial distributions over the perturbations.  As such, further investigation of the assumptions governing $\hbar\to 0$ convergence may be tied to how one chooses to interpret the initial probability distributions over the possible perturbations.

%9. Main point: the method of arbitrary functions applies beyond classical physics!  This should be considered by both philosophers and physicists in discussions of physical probability.

However one answers these interpretive questions, this paper establishes novel results governing  universal probabilistic limiting behavior in quantum systems.  We have thus shown that the method of arbitrary functions applies beyond classical physics.  We hope this approach, and the results it yields, engender further discussion and better understanding of physical probabilities in quantum systems.

% Future directions:
% 1. More outcomes: cut-off sin potentials
% 2. Continuous outcomes: higher dimensional multi-well systems
% 3. More realistic models of meausurement: e.g., Stern-Gerlach device

To conclude, we point out some possible directions for future research expanding on the present results.  First, one might consider toy models that allow for more than two outcome states.  One way to model such a setup is with a multiple-well system such as a cosine potential on a closed interval as considered by \citet[][\S4.2.1]{HeWo18}.  One might also wish to study toy models that allow for measurements of continuous quantities, with a corresponding continuous infinity of outcome states.  One way to model such a system is with a higher dimensional quartic potential (See \citep[][p. 416]{La17} and \citep[][p. 79]{vdV23}).  Finally, one might wish to study more realistic models of measurement \citep{SpHa08,Sp09,AlBaNi13}, like the physical dynamics governing the motion of an electron in a Stern-Gerlach device \citep{PoBaCrGo05,WeWe17}.  Do the results of this paper generalize to derivations of probabilities in any of these other models?  At the very least, we believe the method of arbitrary functions deserves further investigation as an approach to generating explanations of quantum probabilities.

\section{Acknowledgements}

The authors thank Jer Steeger, Samuel C. Fletcher, Conor Mayo-Wilson, Bashir Abdel-Fattah, and audiences at the University of Bristol and the Washington eXperimental Mathematics Lab for helpful discussions concerning the material in this paper. This work was supported by the National Science Foundation under Grant No. 2043089.

\begin{appendices}

\section{Calculations} \label{app:calculations}

In this appendix we show why it is that the off-diagonal integrals in the proof of Theorem \ref{thm:double well} vanish. Recall that $\mu$ is a probability measure on $D^+\cup D^-$ such that for every $\hbar>0$, $\epsilon_\hbar^*\mu$ is absolutely continuous with respect to $\nu$ on $\R^\N$, where $\nu$ is the countable product of Lebesgue measures on $\R$. Let $g$ be the corresponding density for $\epsilon^*_\hbar\mu$ on $\R^\N$ so that we may replace integrals over the variable $\delta$ with respect to the measure $d\mu(\delta)$ by substituting for integrals over the variable $\mathbf{E} = (E^{(j)}_{\delta,\hbar})_{j\in\mathbb{N}}$ denoting a sequence of energy eigenvalues with respect to the measure $g(\mathbf{E}) d\nu(\mathbf{E})$ . Now, for $j \neq k$ we get that
\begin{align}
    \lim_{t \to \infty}  \int d \mu(\delta) &  \bigg(  \overline{ c_{\delta,\hbar}^{(j)}} c_{\delta,\hbar}^{(k)} e^{i (E^{(j)}_{\delta,\hbar} - E^{(k)}_{\delta,\hbar})t/\hbar}  \big \langle  \Psi^{(j)}_{\delta,\hbar} ,  \Omega_\hbar^W(x,p)  \Psi^{(k)}_{\delta,\hbar} \big \rangle \bigg) \\
    &= \lim_{t \to \infty}  \int \prod_{l} dE^{(l)} g( \mathbf{E} )  \bigg(  \overline{ c_{\mathbf{E},\hbar}^{(j)}} c_{\mathbf{E},\hbar}^{(k)} e^{i (E^{(j)}_{\mathbf{E},\hbar} - E^{(k)}_{\mathbf{E},\hbar})t/\hbar}  \big \langle  \Psi^{(j)}_{\mathbf{E},\hbar} ,  \Omega_\hbar^W(x,p)  \Psi^{(k)}_{\mathbf{E},\hbar} \big \rangle \bigg) \\
    &= \lim_{t \to \infty} \int \prod_{l \neq j,k} dE^{(l)} \int dE^{(j)} dE^{(k)} g( \mathbf{E} )  \bigg(  \overline{ c_{\mathbf{E},\hbar}^{(j)}} c_{\mathbf{E},\hbar}^{(k)} e^{i (E^{(j)}_{\mathbf{E},\hbar} - E^{(k)}_{\mathbf{E},\hbar})t/\hbar}  \big \langle  \Psi^{(j)}_{\mathbf{E},\hbar} ,  \Omega_\hbar^W(x,p)  \Psi^{(k)}_{\mathbf{E},\hbar} \big \rangle \bigg) \\ 
    &= \frac{1}{2} \lim_{t \to \infty} \int \prod_{l \neq j,k} dE^{(l)} dv \int du \, g(\mathbf{E})  \bigg(  \overline{ c_{\mathbf{E},\hbar}^{(j)}} c_{\mathbf{E},\hbar}^{(k)} e^{i ut/\hbar}  \big \langle  \Psi^{(j)}_{\mathbf{E},\hbar} ,  \Omega_\hbar^W(x,p)  \Psi^{(k)}_{\mathbf{E},\hbar} \big \rangle \bigg) \label{line:substitutions} \\
    &= \frac{1}{2} \lim_{t \to \infty} \int \prod_{l \neq j,k} dE^{(l)} dv \int du \, \alpha^\hbar_{x,p}( \mathbf{E} ) e^{i ut/\hbar} \label{line:relabeled} \\
    &= \frac{1}{2} \int \prod_{l \neq j,k} dE^{(l)} dv \lim_{t \to \infty} \int du \, \alpha^\hbar_{x,p}( \mathbf{E} ) e^{i ut/\hbar} \label{line:dominated convergence theorem in appendix} \\
    &= \frac{1}{2} \int \prod_{l \neq j,k} dE^{(l)} dv \lim_{t \to \infty} E_\alpha[e^{iut/\hbar}] \\
    &= 0.
\end{align}
In the line (\ref{line:substitutions}) we have used the substitutions $u = E^{(j)}_{\mathbf{E},\hbar} - E^{(k)}_{\mathbf{E},\hbar} $ and $v = E^{(j)}_{\mathbf{E},\hbar} + E^{(k)}_{\mathbf{E},\hbar} $. In the line (\ref{line:relabeled}) we have simply relabeled most of the integrand using the function
\begin{align}
\alpha^\hbar_{x,p}(\mathbf{E}) = g(\mathbf{E}) \overline{ c_{\mathbf{E},\hbar}^{(j)}} c_{\mathbf{E},\hbar}^{(k)} \big \langle  \Psi^{(j)}_{\mathbf{E},\hbar} ,  \Omega_\hbar^W(x,p)  \Psi^{(k)}_{\mathbf{E},\hbar} \big \rangle.
\end{align}
Line (\ref{line:dominated convergence theorem in appendix}) follows from the dominated convergence theorem with the bound
\begin{align}
    \Big| \int du \, \alpha^\hbar_{x,p}( \mathbf{E} ) e^{i ut/\hbar} \Big|
    \leq  \int du \, | \alpha^\hbar_{x,p}( \mathbf{E} ) | 
    \leq 2 \int du |g(\mathbf E)|
\end{align}
which is finite since $g$ is a probability density for $\R^\N$. Finally, the last two steps involve recognizing that the integral over $u$ is simply the expectation value (denoted $E_\alpha$) of the random variable $e^{iut/\hbar}$ for a complex density $\alpha^\hbar_{x,p}$ and then using the fact that this expectation value approaches 0 in the $t \to \infty$ limit \citep{En92}.

% The need for the above calculation arises in the process of determining Equation \ref{eqn:expectation value of a Wigner distribution}, which is the $t \to \infty$ limit of an expectation value of the observable $f$ for a mixture of Wigner distributions $W_\hbar^{\Psi_{\delta,\hbar}(t)}$ determined by $g$, each of which corresponds to the time evolution of the initial state according to a potential involving a different perturbation $\delta$.

\end{appendices}

\newpage
\bibliographystyle{apalike}
\bibliography{bibliography.bib}

\end{document}